\newtheorem{theo}{Theorem}[section]
\newtheorem{prop}[theo]{Proposition}
\newtheorem{propri}[theo]{Property}
\newtheorem{remark}[theo]{Remark}
\newtheorem{ass}[theo]{Assumption}
\newcommand{\argmin}[1]{\underset{#1}{\mathrm{argmin} \ }}%
\renewcommand{\d}{\text{d}}%
\newcommand{\one}{\mathds{1}}%
\newcommand{\cM}{\mathcal{M}}%
\newcommand{\cH}{\mathcal{H}}%
\newcommand{\E}{\mathbb{E}}
\newcommand{\dd}{\mathrm{d}}
\newcommand{\cO}{\mathcal{O}}%
\newcommand{\Gam}{\mathcal{G}\text{amma}}%
\newcommand{\R}{\mathbb{R}}%
\newcommand{\w}{\widehat}%
\newcommand{\taubf}{\text{\mathversion{bold}{$\tau$}}}%
\newcommand{\tbfp}{{\bf{tp}}}%
\newcommand{\abf}{{\bf{a}}}%
\newcommand{\cbf}{{\bf{c}}}%
\newcommand{\lambdabf}{\text{\mathversion{bold}{$\lambda$}}}%
\newcommand{\lambdabar}{\overline{\lambda}}%
\newcommand{\nubf}{\text{\mathversion{bold}{$\nu$}}}%
\newcommand{\rhobf}{\text{\mathversion{bold}{$\rho$}}}%
\newcommand{\minim}[1]{\underset{#1}{\mathrm{min} \ }}%
\definecolor{darkgreen}{RGB}{0,100,0}
\newcommand{\change}[2]{\textcolor{black}{#2}}
\newcommand{\changeNew}[2]{\textcolor{gray}{}\textcolor{black}{#2}}
\begin{document}

\begin{frontmatter}

  \title{Multiple change-point detection for \changeNew{some point}{Poisson} processes}

 \author[label1]{Charlotte Dion-Blanc}
 \author[label1,label3]{Diala Hawat}
 \author[label2]{Emilie Lebarbier}
 \author[label1]{St\'ephane Robin}
 
 \address[label1]{organization={Sorbonne Universit\'e, Universit\'e Paris Cit\'e, CNRS, Laboratoire de
  Probabilit\'es, Statistique et Mod\'elisation, LPSM},
             city={Paris},
             postcode={75005},
             state={},
             country={France}}

 \address[label2]{organization={Université Paris Nanterre, Modal'X},
             city={Nanterre},
             postcode={92000},
             country={France}}

 \address[label3]{organization={EDF},
             city={Saclay},
             postcode={91120},
             country={France}}
    

\begin{abstract}
The aim of change-point detection is to identify behavioral shifts within time series data. This article focuses on scenarios where the data is derived from an inhomogeneous Poisson process or a marked Poisson process. 
We present a methodology for detecting multiple offline change-points using a minimum contrast estimator. Specifically, we address how to manage the continuous nature of the process given the available discrete observations. Additionally, we select the appropriate number of changes via a cross-validation procedure which is particularly effective given the characteristics of the Poisson process. Lastly, we show how to use this methodology for self-exciting processes with changes in the intensity. Through experiments, with both simulated and real datasets, we showcase the advantages of the proposed method, which has been implemented in the R package.
\end{abstract}


\begin{keyword}  Change-point detection \sep Point process \sep Dynamic programming \sep Exact optimization \sep Cross-Validation
\end{keyword}

\end{frontmatter}


\section{Introduction}

Detecting multiple change-points is one of the most common tasks in statistics, data analysis and signal processing. The problem can be formulated as follows: considering a process observed over time, identify the time instants, called change-points, before and after which the intensity of the process is different. In this work, we consider the detection of change-points in the distribution of a point process. In particular, we focus on the simplest and most fundamental point process, namely the Poisson process (see \cite{ABG93} for a general introduction to Poisson processes), which has been widely used as a reference model in many application domains, including volcanology \citep{bebbington}, epidemiology \citep{epidemio1}, and cyber-attack modeling \citep{cyber1, cyber2}. More precisely, we consider a heterogeneous Poisson process whose intensity function is assumed to be piecewise constant, and our objective is to detect the times at which the intensity changes.

\subsection{State of the art}

\paragraph{General approaches to change-point detection and statistical issues}
Change-point detection has been widely studied in the literature \citep[see, e.g.,][for surveys]{NHZ2016,TOV2020}. Existing methods can be categorized according to several main distinctions: ($a$) single versus multiple change-point detection, ($b$) online versus offline settings, ($c$) frequentist versus Bayesian inference, and ($d$) discrete versus continuous time. All combinations have been considered in one way or another in the literature. The online change-point detection aims to detect changes while the process is being observed, whereas the offline setting addresses the problem of detection once the process has been observed over a fixed period of time. Our work is situated within the framework of offline multiple change-point detection for continuous-time processes in a frequentist setting. 

Within this framework, from a statistical point of view, change-point detection raises three main issues: ($i$) locating the change-points, ($ii$) estimating the parameters governing the process between change-points, and ($iii$) choosing the number of change-points. Problem ($ii$) is generally straightforward to address using, for instance, maximum likelihood estimation, once the change-points have been identified \citep[see, e.g.,][]{YK2001}. Determining the number of change-points ($iii$) is a typical model selection problem, which is examined later. Locating the change-points ($i$) is the most challenging aspect, both in discrete time, because of the combinatorial number of possible segmentations, and in continuous time, as considered here, mainly because it involves minimizing a contrast function (e.g., the negative log-likelihood) that is not continuous with respect to the change-point locations. This problem therefore constitutes the main focus of the present paper.

\paragraph{Change-point detection in Poisson processes}

The online detection of change-point in Poisson processes has been recently considered \citep[see, e.g.,][]{adams2007bayesian, corneck2025online, LXP26}. 
For the same process in the offline setting, Bayesian approaches have been proposed by \cite{RA1986} and \cite{GB2015} for the detection of a single change-point, and by \cite{YK2001} and \cite{S2021} for the detection of multiple change-points. As already mentioned, we place our work within the framework of multiple offline frequentist detection, similarly to \cite{grela}, which uses multiple testing techniques, whereas we adopt a segmentation-based approach. 
In the framework of daily home activity, \cite{MaK24} consider a hierarchical model involving heterogeneous Poisson processes; however, change-point detection is performed in discrete time.


The optimization problem for determining the optimal locations of change-points (issue ($ii$)) in a discrete time setting is not straightforward because classical contrasts are not continuous with respect to the change-points. The optimization thus requires exploring the whole segmentation space, which grows exponentially with the number of time points. However, the problem can be efficiently solved using dynamic programming  (DP) provided the quantity to be optimized is segment-additive, with a quadratic computational complexity \citep{AuL89}. 
The problem does not become easier in continous time, because, as we will show, although the change-point locations themselves are continuous parameters, most classical contrast functions are still not continuous with respect to them, and the segmentation space is now infinite-dimensional. 
The exploration of the continuous segmentation space has been considered by \cite{WO1997} and \cite{YK2001} for the detection of a single change-point in a Poisson process. Importantly, \cite{YK2001} observe that the negative log-likelihood is in fact concave between two successive event times, allowing the problem to be solved in a simple and exact manner. This observation is essential and forms the basis of the strategy we propose here for the detection of multiple change-points.
Alternatively, the optimization problem ($ii$) can be tackled by discretizing time (see \cite{AMRPS2008} for medicine, \cite{AREZ2011} for pollution monitoring, or \cite{SZ2012} for genomics), but this comes as the price of computational efficiency and/or accuracy.

\paragraph{Hawkes-type processes}

The present article also allows us to handle a self-exciting process, namely a modified Hawkes process with change-points. Only a few references address change-point detection in Hawkes processes, and most of them rely on online procedures such as \cite{hawkesMLE1, hawkesMLE2}. We also mention \cite{bhaduri} for a sequential testing procedure, and the recent work of \cite{zhang2024conjugate}, who study the problem within a Bayesian framework using Monte Carlo sampling for inference. Changes in Hawkes process intensity are also considered by \cite{WYW23} and \cite{BR2026} in a discrete-time Hawkes setting; this point of view is discussed in the numerical section. Finally, the method proposed by \cite{KBP25} is the closest to ours in the sense that it addresses the detection of multiple change-points in a Hawkes model. Nevertheless, their approach is substantially different, as they study the deterministic equation satisfied by the expectation of the process, and their theoretical results rely on asymptotic regimes where the event rate per unit time becomes very large.

%
%
\subsection{Our contribution}

Let us emphasize the main novelties of this work. 
\begin{description}
\item
[{\it Continuous-time approach without discretization:}] 
 Unlike classic methods that often discretize time or use time "binning" to simplify the problem, this work directly addresses the continuous nature of inhomogeneous Poisson processes. 
The proposed work provides a rigorous theoretical justification—based on concavity and additivity of the contrast—to prove that an exact solution can be found without an arbitrary search grid by focusing solely on event times.

\item[{\it Fast and exact optimization via dynamic programming:}] By reducing the continuous search space to a finite grid tied to the observations, the paper enables the use of dynamic programming (DP) \citep{AuL89} to obtain the previously exact solution in an efficient manner.

\item[{\it Introduction of \textit{admissible} contrasts:}] To avoid degenerate solutions (zero-length segments) frequent with standard maximum likelihood, the authors propose new contrasts, such as the Poisson-Gamma contrast, which act as regularization.

\item[{\it Selection of the number of segments via cross-validation:}] The paper proposes a cross-validation procedure made valid by the \textit{thinning property} of Poisson processes, thus overcoming the usual limitations of cross-validation for discrete segmentation problems.

\item[{\it Extension to self-exciting processes (Hawkes-type processes)}:] The methodology is extended to Hawkes-type processes through the use of the time-scaling theorem, from \cite{papa}, allowing a complex process to be reduced to a heterogeneous Poisson process to which our methodology applies.

%
\end{description}
%

\subsection{Outline}
We present the model in Section \ref{sec:ModelPP}. Next, the optimization procedure for a given number of segments is described in Section \ref{sec:GeneralCP} and the admissible contrasts are discussed in Section \ref{sec:ContrastPP}. This methodology is first extended to marked Poisson processes in Section \ref{sec:MPP} and to Hawkes(-like) processes in Section \ref{sec:HP}. Section \ref{sec:selection} presents the proposed procedure for selecting the number of segments. Numerical experiments on synthetic data are presented in Section \ref{sec:simu} and the use of the whole methodology is illustrated on earth science datasets in Section \ref{sec:application}.  
The proposed method has been implemented in a R package \texttt{CptPointProcess}, which is available at \url{github.com/Elebarbier/CptPointProcess}. 
Some elements of proofs are given in Section \ref{app:proofs}.

\color{black}

\section{Poisson model} \label{sec:ModelPP}

Consider a general Poisson process $\{N_t\}_{0 \leq t \leq 1}$ with intensity $t \mapsto \lambda(t)$ and denote $N_t=N((0,t])$ the number of events occurring over the interval $[0,t]$. Let's denote $T_1, \ldots, T_{N_1}$ the event times observed over the observation time interval $[0,1]$, such that $N_{T_i}-N_{T_i^-}=1$ where ${T_i^-}$ is the time just before ${T_i}$, with the convention $T_0=0$ and $T_{N_1+1}=1$. We assume that $T_i-T_i^-=0$ for all $i$ and designate $N_1=n$ for simplicity. In the following, we work conditionally on this event.

We assume that the intensity of the process is piecewise constant. Let $m$ be a partition of $[0,1]$ composed of $K$ segments denoted $I_k : =(\tau_{k-1}, \tau_k]$ for $k=1, \ldots, K$ where $0=\tau_0< \tau_1 < \ldots < \tau_K=1$ are called the change-points, and $\taubf=(\tau_1,\tau_2, \ldots, \tau_{K-1})$ the vector of $K-1$ unknown change-points.  
Thus, the partition $m$ is defined as $m:= \{I_k\}_{1\leq k \leq K}$ and the associated intensity of the process is  
\begin{equation}\label{eq:intensity}
\lambda(t):= \sum_{k=1}^K \lambda_k \one_{I_k}(t),
\end{equation}
where $\lambda_k$ is the intensity in segment $I_k$.  The vector of intensities is denoted $\lambdabf=(\lambda_1,\lambda_2, \ldots, \lambda_{K})$. This model corresponds to a piecewise homogeneous Poisson process: we call it the change-point Poisson model.

Let $\Delta \tau_k:= \tau_k- \tau_{k-1}$ be the length of the interval $I_k$ and
$\Delta N_k:=N((\tau_{k-1}, \tau_k])=N(\tau_k)-N(\tau_{k-1})$ the number of events occurring in the interval $I_k$. The probability of a given path $\{N_t\}_{0 \leq t \leq 1}$ (which we denote by $N$) for the Poisson change-point model (identified by a subscript $P$) is then as follows
\begin{equation}\label{eq:likelihoodPP}
p_P(N; \taubf, \lambdabf) = \prod_{k=1}^K e^{-\lambda_k \Delta \tau_k} \lambda_k^{\Delta N_k}.
\end{equation}
We can easily see that, for a given $\taubf$, $p_P(N; \taubf, \lambdabf)$ is maximal for $\widehat{\lambda}_k(\taubf) = \Delta N_k / \Delta \tau_k$, $1 \leq k \leq K$. 
Consequently, the objective is to find the set of maximum-likelihood change-point locations that minimize the Poisson contrast:
\begin{equation} \label{eq:contrastPP}
  \gamma_P(\taubf ; N) 
  := - \log p_P(N ; \taubf, \widehat{\lambdabf}(\taubf))
  = \sum_{k=1}^K \Delta N_k \left ( 1-\log \left ( \frac{\Delta N_k}{ \Delta \tau_k} \right) \right).
\end{equation}
Note that $\gamma_P$ is additive with respect to segments, due to the independence of the event times in disjoint segments: this property plays an essential role in the sequel.

\section{Optimal change-points}\label{sec:GeneralCP}
We now consider the determination of the optimal set of change-point locations $\widehat{\taubf}$, defined as the minimizer of a given data-dependent contrast $\gamma(\taubf; N)$, such as the Poisson contrast $\gamma_P$ defined in \eqref{eq:contrastPP}. An important feature of the resulting optimization problem on $(0, 1)^{K-1}$ is that the function $\gamma$ may not be convex, or even continuous with respect to $\tau_k$ (the Poisson contrast $\gamma_P$ is neither). Consequently, classical optimization strategies such as gradient descent do not apply. 

Still in the context of single change-point detection, \cite{YK2001} makes the critical observation that the Poisson contrast $\gamma_P$ is in fact concave on every inter-event interval $[T_i, T_{i+1}[$, so that the optimal change-point is necessarily located at an event time $T_i$ or just before, in $T_i^-$. 
Extending \cite{YK2001}, we show that, in the context of multiple change-point detection with a given number of segments $K$, some assumptions on the contrast function are sufficient to guarantee that the optimal change-points $\widehat{\tau}_k$ ($1 \leq k \leq K-1$) are all located at, or just before, an event time. This observation reduces the original continuous-space optimization problem to a discrete-space optimization problem, for which efficient and exact algorithms exist.

\subsection{Segmentation space and partitioning} \label{sec:SegSpace}

For a fixed number of segments $K$, we define the segmentation space as the set of all possible partitions of $(0,1)$ into $K$ segments:
$$
\cM^K: = \left\{\taubf = (\tau_1, \ldots, \tau_{K-1}) \in (0,1)^{K-1}; 0 = \tau_0< \tau_1 < \ldots < \tau_K = 1 \right\}.
$$
An element of $\cM^K$ (i.e. a "segmentation") is therefore a sequence of change-points.
In addition, we define the set of all possible $K$-uplets of integers, possibly null, whose sum is $n$:
\begin{equation} \label{space:UKN}
\Upsilon^{K,n} : = 
\left\{\nubf : = (\nu_1, \ldots, \nu_K) \in \{0, 1, \ldots, n\}^K, ~ \sum_{k = 1}^K \nu_k = n\right\}.
\end{equation}
It's a finite set with cardinality $\left| \Upsilon^{K,n} \right| = \binom{n+K-1}{K-1}$.
Then, for a fixed $\nubf \in \Upsilon^{K,n}$ and a given path $N = \{N_t\}_{0 \leq t \leq 1}$ with $n$ events (i.e. $N_1=n$), we define the subset of segmentations from $\cM^K$ with a prescribed number of events in each segment given by $\nubf$, as follows
\begin{equation} \label{space:PnuK}
\cM_\nubf^K(N) : = \left\{\taubf \in \cM^K,  \forall \ 1 \leq k \leq K: ~\Delta N_k = \nu_k \right\}.
\end{equation}
It's important to note that this constraint is equivalent to imposing that, for each $1\leq k \leq K-1$,
$$
\tau_k \in \left [T_{\sum_{j = 1}^k \nu_j}, T_{\sum_{j = 1}^k \nu_j+1} \right )
$$
by setting $T_0 = 0$ and $T_{n+1} = 1$.
It is obvious that the count vectors $\nubf \in \Upsilon^{K,n}$ induce a partition of the segmentation space $\cM^K$: 
$$
\cM^K = \bigcup_{\nubf \in \Upsilon^{K,n}} \cM_\nubf^K (N), 
\qquad
\{\nubf \neq \nubf'\} \Rightarrow \{\cM_\nubf^K(N) \cap \cM_{\nubf'}^K(N)  = \varnothing\}.
$$
Figure \ref{fig:SpaceSegNew} shows the segmentation space $\cM^K$ for $K = 3$ and its partition into all subsets $\cM^K_\nubf (N)$ for $\nubf \in \Upsilon^{K, n}$ with $n = 4$ events. The gray region labeled by $\nubf = (2, 1, 1)$ is $\cM_\nubf^K (N)$, i.e., the set of all segmentations in $(0, 1)^2$ such that $\Delta N_1 = 2, \Delta N_2 = 1, \Delta N_3 = 1$ or, equivalently, $T_2 \leq \tau_1 < T_3$ and $T_3 \leq \tau_2 < T_4$.

\begin{figure}[hbtp]
  \begin{center}
  \begin{tikzpicture}
  \newcommand{\coef}{.75}
  \newcommand{\Tmax}{10*\coef}; \newcommand{\Ta}{1*\coef}; \newcommand{\Tb}{3.5*\coef}; \newcommand{\Tc}{7.25*\coef}; \newcommand{\Td}{8.5*\coef}

  \draw[->] (-0.1, 0) -- (\Tmax+.5, 0);
  \draw[->] (0, -0.1) -- (0, \Tmax+.5);
  \draw (\Tmax+.75, 0) node {$\tau_1$};
  \draw (0, \Tmax+.75) node {$\tau_2$};
  \draw[-] (0, 0) -- (\Tmax, \Tmax);
  \draw[-] (-0.1, \Tmax) -- (\Tmax, \Tmax);
  \draw[-] (\Tmax, -0.1) -- (\Tmax, \Tmax);
  \draw (-0.4, 0) node {\small $0$};
  \draw (0, -0.4) node {\small $0$};
  \draw (-0.4, \Tmax) node {\small $1$};
  \draw (\Tmax, -0.4) node {\small $1$};
  
  \draw [black, fill=gray!20] (\Ta, \Ta) -- (\Ta, \Tb) -- (\Tb, \Tb) -- cycle;
  \draw [black, fill=gray!20] (\Tb, \Tb) -- (\Tb, \Tc) -- (\Tc, \Tc) -- cycle;
  \draw [black, fill=gray!20] (\Tc, \Tc) -- (\Tc, \Td) -- (\Td, \Td) -- cycle;
  
  \draw [black, fill=gray!20] (\Ta, \Ta) -- (\Ta, \Tb) -- (0, \Tb) -- (0, \Ta);
  \draw [black, fill=gray!20] (\Ta, \Tb) -- (\Ta, \Tc) -- (0, \Tc) -- (0, \Tb);
  \draw [black, fill=gray!20] (\Ta, \Tc) -- (\Ta, \Td) -- (0, \Td) -- (0, \Tc);
  \draw [black, fill=gray!20] (\Ta, \Td) -- (\Ta, \Tmax) -- (0, \Tmax) -- (0, \Td);
  \draw [black, fill=gray!20] (\Tb, \Tb) -- (\Tb, \Tc) -- (0, \Tc) -- (0, \Tb);
  \draw [black, fill=gray!20] (\Tb, \Tc) -- (\Tb, \Td) -- (0, \Td) -- (0, \Tc);
  \draw [black, fill=gray!20] (\Tb, \Td) -- (\Tb, \Tmax) -- (0, \Tmax) -- (0, \Td);
  \draw [black, fill=gray!20] (\Tc, \Tc) -- (\Tc, \Td) -- (0, \Td) -- (0, \Tc);
  \draw [black, fill=gray!20] (\Tc, \Td) -- (\Tc, \Tmax) -- (0, \Tmax) -- (0, \Td);
  \draw [black, fill=gray!20] (\Td, \Td) -- (\Td, \Tmax) -- (0, \Tmax) -- (0, \Td);
  \draw (-0.4, \Ta) node {\small $T_1$}; \draw (\Ta, -0.4) node {\small $T_1$};
  \draw[dashed] (\Ta, 0) -- (\Ta, \Ta); \draw[-] (0, \Ta) -- (\Ta, \Ta); \draw[-] (\Ta, -0.1) -- (\Ta, 0); \draw[-] (\Ta, \Ta) -- (\Ta, \Tmax);
  \draw (-0.4, \Tb) node {\small $T_2$}; \draw (\Tb, -0.4) node {\small $T_2$};
  \draw[dashed] (\Tb, 0) -- (\Tb, \Tb); \draw[-] (0, \Tb) -- (\Tb, \Tb); \draw[-] (\Tb, -0.1) -- (\Tb, 0); \draw[-] (\Tb, \Tb) -- (\Tb, \Tmax);
  \draw (-0.4, \Tc) node {\small $T_3$}; \draw (\Tc, -0.4) node {\small $T_3$};
  \draw[dashed] (\Tc, 0) -- (\Tc, \Tc); \draw[-] (0, \Tc) -- (\Tc, \Tc); \draw[-] (\Tc, -0.1) -- (\Tc, 0); \draw[-] (\Tc, \Tc) -- (\Tc, \Tmax);
  \draw (-0.4, \Td) node {\small $T_4$}; \draw (\Td, -0.4) node {\small $T_4$};
  \draw[dashed] (\Td, 0) -- (\Td, \Td); \draw[-] (0, \Td) -- (\Td, \Td); \draw[-] (\Td, -0.1) -- (\Td, 0); \draw[-] (\Td, \Td) -- (\Td, \Tmax);
  
  \newcommand{\diagvshift}{.5}
  \newcommand{\diaghshift}{.25}
  \draw (\Ta/2-\diaghshift, \Ta/2+\coef*\diaghshift) node {\tiny $(0, 0, 4)$};
  \draw (\Ta/2+\Tb/2-\diaghshift, \Ta/2+\Tb/2+\diaghshift) node {\tiny $(1, 0, 3)$};
  \draw (\Tb/2+\Tc/2-\diaghshift, \Tb/2+\Tc/2+\diaghshift) node {\tiny $(2, 0, 2)$};
  \draw (\Tc/2+\Td/2-\diaghshift, \Tc/2+\Td/2+\diaghshift) node {\tiny $(3, 0, 1)$};
  \draw (\Td/2+\Tmax/2-\coef*\diaghshift, \Td/2+\Tmax/2+\diaghshift) node {\tiny $(4, 0, 0)$};
  
  \draw (\Ta/2, \Ta/2+\Tb/2) node {\tiny $(0, 1, 3)$};
  \draw (\Ta/2, \Tb/2+\Tc/2) node {\tiny $(0, 2, 2)$};
  \draw (\Ta/2, \Tc/2+\Td/2) node {\tiny $(0, 3, 1)$};
  \draw (\Ta/2, \Td/2+\Tmax/2) node {\tiny $(0, 4, 0)$};
  \draw (\Ta/2+\Tb/2, \Tb/2+\Tc/2) node {\tiny $(1, 1, 2)$};
  \draw (\Ta/2+\Tb/2, \Tc/2+\Td/2) node {\tiny $(1, 2, 1)$};
  \draw (\Ta/2+\Tb/2, \Td/2+\Tmax/2) node {\tiny $(1, 3, 0)$};
  \draw (\Tb/2+\Tc/2, \Tc/2+\Td/2) node {\tiny $(2, 1, 1)$};
  \draw (\Tb/2+\Tc/2, \Td/2+\Tmax/2) node {\tiny $(2, 2, 0)$};
  \draw (\Tc/2+\Td/2, \Td/2+\Tmax/2) node {\tiny $(3, 1, 0)$};
  
  \draw[black, fill=red!80]  (0, \Ta) circle (0.1);
  \draw[black, fill=red!80]  (0, \Tb) circle (0.1);
  \draw[black, fill=red!80]  (0, \Tc) circle (0.1);
  \draw[black, fill=red!80]  (0, \Td) circle (0.1);
  \draw[black, fill=red!80]  (0, \Tmax) circle (0.1);
  \draw[black, fill=red!80]  (\Ta, \Ta) circle (0.1);
  \draw[black, fill=red!80]  (\Ta, \Tb) circle (0.1);
  \draw[black, fill=red!80]  (\Ta, \Tc) circle (0.1);
  \draw[black, fill=red!80]  (\Ta, \Td) circle (0.1);
  \draw[black, fill=red!80]  (\Ta, \Tmax) circle (0.1);
  \draw[black, fill=red!80]  (\Tb, \Tb) circle (0.1);
  \draw[black, fill=red!80]  (\Tb, \Tc) circle (0.1);
  \draw[black, fill=red!80]  (\Tb, \Td) circle (0.1);
  \draw[black, fill=red!80]  (\Tb, \Tmax) circle (0.1);
  \draw[black, fill=red!80]  (\Tc, \Tc) circle (0.1);
  \draw[black, fill=red!80]  (\Tc, \Td) circle (0.1);
  \draw[black, fill=red!80]  (\Tc, \Tmax) circle (0.1);
  \draw[black, fill=red!80]  (\Td, \Td) circle (0.1);
  \draw[black, fill=red!80]  (\Td, \Tmax) circle (0.1);
  
  \draw[black, fill=black]  (0, 0) circle (0.1);
  \draw[black, fill=black]  (\Tmax, \Tmax) circle (0.1);
\end{tikzpicture}
  \end{center}
  \caption{\change{Segmentation space $\cM^K_\star$ for $K = 3$ and $N_1 = n = 4$ in gray. Each gray block in the upper triangle corresponds to an element of $\Upsilon^{K,n}_\star$. Each white block in the upper triangle corresponds to an element of $\Upsilon^{K,n} \setminus \Upsilon^{K,n}_\star$.}{Segmentation space $\cM^K$ for $K = 3$ and $N_1 = n = 4$ in gray and white in the upper triangle, and its restricted version $\cM^K_\star$ defined in \eqref{RestrictedSegSpace} in gray. Each block corresponds to an element of $\Upsilon^{K,n}$.}} 
  \label{fig:SpaceSegNew}
\end{figure}

\change{
For obvious reasons, we do not consider the configurations $\nubf$ that may contain more than two successive zeros,  i.e.  one of the interval $\displaystyle  [T_{\sum_{j = 1}^{k} \nu_j}, T_{\sum_{j = 1}^{k} \nu_j+1}  )$ contains more than two changes. This may occur only when $K>2$. We therefore restrict the segmentation space to 
\begin{equation} \label{RestrictedSegSpace}
\cM^K_\star: = \bigcup_{\nubf \in \Upsilon^{K,n}_\star} \cM_\nubf^K(N),
\end{equation}
where 
\begin{equation*} \%label{space:UKN_star}
\Upsilon^{K,n}_\star = \left\{ \text{$\nubf \in \Upsilon^{K,n}$, for $2\leq k \leq K-1$: if $\nu_k = 0$ then $\nu_{k-1}\neq 0$ and $\nu_{k+1} \neq 0$} \right\}
\end{equation*}
is of cardinality 
$$
\left| \Upsilon^{K,n}_\star \right| = \sum_{h = 1}^K \binom{n-1}{h-1} \binom{h+1}{K-h},
$$ 
with the convention $\binom{p}{q} = 0$ if $q > p$. In Figure \ref{fig:SpaceSegNew}, the forbidden segmentation subsets (i.e. the elements of $\Upsilon^{K,n} \setminus \Upsilon^{K,n}_\star$) are the white upper triangular regions labeled with $\nubf = (0, 0, 4)$ and $\nubf = (4, 0, 0)$, respectively.
}{}

\subsection{Minimum contrast estimation}  \label{sec:Contrast}

We are now looking for the optimal segmentation (i.e. the sequence of change-points) $\widehat{\taubf}$ with $K$ segments that minimizes a given contrast $\gamma$ within $\cM_{\change{\star}{}}^K$:
\begin{equation} \label{eq:optSeg}
\widehat{\taubf} =  \argmin{\taubf \in \cM^K_{\change{\star}{}}}   \gamma (\taubf).
\end{equation}
As the number of segments $K$ is constant in the rest of the section, the exponent $K$ can be omitted for clarity. \\

The methodology we propose is based on two main assumptions on the contrast function $\gamma$. \\

\begin{ass}[Segment-additivity assumption]\label{hyp:additivity} 
For each $\nubf \in \Upsilon_{\change{\star}{}}^n$ and for all $\taubf \in \cM_\nubf(N)$,  the contrast $\gamma$ writes as sum of segment-specific cost functions:
$$
\gamma (\taubf) = \sum_{k = 1}^K C(\nu_k, \Delta\tau_k).
$$
\end{ass}

As an example, for the Poisson contrast $\gamma_P$ defined in \eqref{eq:contrastPP}, the cost function is $C(\nu_k, \Delta\tau_k) = \nu_k \left  (1-\log \left (\nu_k / \Delta \tau_k\right) \right)$. \\

\begin{ass}[\change{Concavity}{(Strict) concavity} assumption] \label{hyp:concavity} 
For each $\nubf \in \Upsilon_{\change{\star}{}}^n$ and each $1 \leq k \leq K$, the cost function
$C(\nu_k,\Delta \tau_k)$ is a \change{}{(strictly)} concave function of $\Delta \tau_k$.
\end{ass}

Under assumption \ref{hyp:additivity}, by partitioning the segmentation space, the optimization problem \eqref{eq:optSeg} can be rewritten as follows
\begin{equation} \label{eq:est_tau}
\w{\taubf}
= \argmin{\nubf \in  \Upsilon^{K,n}_{\change{\star}{}}}   \min_{\taubf \in \cM_\nubf^K(N)}  \gamma (\taubf) = \argmin{\nubf \in  \Upsilon^{K,n}_{\change{\star}{}}}   \min_{\taubf \in \cM_\nubf^K(N)}  \sum_{k = 1}^K C(\nu_k,\Delta \tau_k).
\end{equation}

We now show that the concavity assumption \ref{hyp:concavity} for each cost function $C$ implies piecewise concavity of the contrast $\gamma$ with respect to $\taubf$. \\

\begin{prop} \label{prop:Concavity_C}
Under assumption \ref{hyp:concavity}, for each $\nubf \in \Upsilon^{K,n}_{\change{\star}{}}$, the contrast function $\taubf \rightarrow \gamma (\taubf)$ is concave with respect to the segmentation $\taubf$ within $\cM_\nubf^K(N)$. \change{}{Under the strict version of assumption \ref{hyp:concavity}, the contrast function $\gamma (\taubf)$ is strictly concave.}
\end{prop}

\change{}{
The proof of this proposition is given in Appendix \ref{sec:proof_concavity}. If this result can be shown easily in for one single change-point \citep[see][]{YK2001}, extending it to the multidimensional case, with more than one change-point, is not \change{a straightforward generalization of this result. It relies on studying the negative definiteness of the Hessian matrix of the contrast function.}
{straightforward as neighbor segments have common boundaries, so all change-point locations need to be handled at once. The proof we propose relies on the negative definiteness of the Hessian matrix of the contrast function. }
}

\subsection{Exact optimization}   \label{sec:DP}
We now show that the optimal segmentation necessarily belongs to a finite subset of $\cM^K_{\change{\star}{}}$. 
More precisely, it necessarily belongs to a finite and known grid, so that the optimal solution can be obtained in an exact and fast manner.
\change{}{To this aim, we limit ourselves to strictly concave cost and contrast function.}

\begin{theo} \label{theo:tau}
Under \change{}{the additivity} assumption \ref{hyp:additivity}  and \change{}{the strict concavity assumption} \ref{hyp:concavity},
for a fixed $\nubf \in  \Upsilon^{K,n}_{\change{\star}{}}$ we get 
$$
\w{\taubf} = \argmin{\taubf \in \cM^K_\nubf (N)}  \gamma  (\taubf) \in \{T_{\nu_1}, T_{\nu_1+1}^-\} \times \{T_{\nu_1+ \nu_2}, T_{\nu_1+\nu_2+1}^-\} \times \ldots \times \{T_{\nu_1+\ldots \nu_{K}},T_{\nu_1+\ldots \nu_{K}+1}^-\}  . 
$$
\end{theo}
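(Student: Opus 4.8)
The plan is to combine Proposition~\ref{prop:Concavity_C} with the elementary fact that a concave function on a compact convex polytope attains its minimum at an extreme (vertex) point, and then to check that the vertices of the cell $\cM^K_\nubf(N)$ are exactly the points of the announced grid.

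First I would fix $\nubf \in \Upsilon^{K,n}_\star$ and set $s_k := \sum_{j=1}^k \nu_j$, so that (as recalled in Section~\ref{sec:SegSpace}) $\cM^K_\nubf(N)$ is the set of $\taubf$ with $\tau_k \in [T_{s_k}, T_{s_k+1})$ for all $k$ and $\tau_1 < \cdots < \tau_{K-1}$, using $T_0 = 0$ and $T_{n+1} = 1$. Let $P_\nubf := \overline{\cM^K_\nubf(N)}$ be its closure, obtained by closing the inter-event intervals and relaxing the strict inequalities; it is a bounded polyhedron, hence a compact convex polytope. By Assumption~\ref{hyp:additivity} the contrast reads $\gamma(\taubf) = \sum_{k=1}^K C(\nu_k, \Delta\tau_k)$, which is continuous on $P_\nubf$ (for admissible contrasts; see Section~\ref{sec:ContrastPP}), and by Proposition~\ref{prop:Concavity_C} it is concave there. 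Since $\cM^K_\nubf(N)$ is dense in $P_\nubf$, the infimum of $\gamma$ over $\cM^K_\nubf(N)$ equals its minimum over $P_\nubf$, and, $\gamma$ being concave on the polytope $P_\nubf$, this minimum is attained at an extreme point $v$ of $P_\nubf$; one may then take $\widehat{\taubf}$ equal to $v$, a coordinate $v_k = T_{s_k+1}$ lying outside $[T_{s_k}, T_{s_k+1})$ being read as $T_{s_k+1}^-$ in accordance with the statement.

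It remains to show that every extreme point $v$ of $P_\nubf$ satisfies $v_k \in \{T_{s_k}, T_{s_k+1}\}$ for each $k$, which is precisely the asserted grid. I would argue by contradiction: assume $T_{s_j} < v_j < T_{s_j+1}$ for some $j$. Since $v$ is non-decreasing, the set $S$ of indices $i$ with $v_i = v_j$ is an interval of consecutive indices; for every $i \in S$ we have $v_i = v_j \in (T_{s_j}, T_{s_j+1})$ and $v_i \in [T_{s_i}, T_{s_i+1}]$, and as the $T_\ell$ are strictly increasing while $(T_{s_j}, T_{s_j+1})$ contains no event time, this forces $s_i = s_j$, i.e. $\nu_i = 0$ for all $i \in S$ except its smallest element. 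Now, the only constraints defining $P_\nubf$ that are active at $v$ along the coordinates indexed by $S$ are the $|S|-1$ equalities $\tau_{\min S} = \cdots = \tau_{\max S}$ (the box constraints $\tau_i \ge T_{s_i}$, $\tau_i \le T_{s_i+1}$ have slack because $v_i$ is strictly interior, and the ordering constraints with the neighbours of the block are strict by maximality of $S$). Hence $v \pm \varepsilon\, e_S$, with $e_S$ the indicator vector of $S$, still lies in $P_\nubf$ for small $\varepsilon > 0$, so $v$ is the midpoint of two distinct feasible points and is not extreme --- a contradiction. Therefore $v_k \in \{T_{s_k}, T_{s_k+1}\}$ for every $k$, which, with the $T^-$ convention, is the claim.

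The main obstacle is this last step: one must handle the ordering constraints $\tau_k < \tau_{k+1}$ with care, since whenever two change-points sit in the same inter-event interval they produce simplicial rather than box faces of $P_\nubf$; the restriction to $\Upsilon^{K,n}_\star$ (no two successive zero counts) keeps these simplices low-dimensional and makes the bookkeeping clean. The passage to the closure and the treatment of the left-limit values $T_{s_k+1}^-$ are routine once one knows the minimum is actually attained, which for possibly unbounded contrasts such as $\gamma_P$ is precisely what the admissibility conditions of Section~\ref{sec:ContrastPP} are designed to guarantee.
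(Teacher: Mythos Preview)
Your proposal is correct and follows the same route as the paper: both rest on Proposition~\ref{prop:Concavity_C} and the principle that a concave function on a convex polytope attains its minimum at an extreme point. The paper's own proof is a single sentence invoking that proposition, so your version simply fills in the details it leaves implicit --- the passage to the closure $P_\nubf$, the identification of its vertices via the contradiction argument on the block $S$, and the $T^-$ convention at the right endpoints.
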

\begin{proof}
The proof follows directly from the \change{}{strict} concavity of the contrast function $\gamma (\taubf)$ with respect to $\taubf \in \cM^K_\nubf (N)$ (see Proposition \ref{prop:Concavity_C}).
\end{proof}

Theorem \ref{theo:tau} guarantees that the $K-1$ optimal change-points in each subset of $\cM^K_\nubf (N)$ is necessarily one of its boundary partitions, thus reducing the search to a finite set of possible solutions. These solutions are illustrated in Figure \ref{fig:SpaceSegNew} by the red circles for the simple case of $K=3$ segments (or $2$ change-points). For example, for $\nubf = (2,1,1)$, $T_2 \leq \tau_1 < T_3$ and $T_3 \leq \tau_2 < T_4$ leading to four possible choices for the optimal solution: $(\w{\tau}_1,\w{\tau}_2) = (T_2,T_3), (T_2,T^-_4),(T_3^-,T_3)$ or $(T_3^-,T^-_4)$.

\change{}{
Note that Theorem \ref{theo:tau} allows configurations $\nubf$ that may contain more than two consecutive zeros,  i.e.,  one of the intervals $\displaystyle  [T_{\sum_{j = 1}^{k} \nu_j}, T_{\sum_{j = 1}^{k} \nu_j+1}  )$ containing more than two change-points (when $K>2$). 
Due to the strict concavity assumption, the optimal location of the two change-points will be either in $T_{\sum_{j = 1}^{k} \nu_j}$ or in $(T_{\sum_{j = 1}^{k} \nu_j+1})^-$, which will give a segment of zero length associated with a zero count.
One may further observe that, if $C(0, 0) = 0$, an arbitrary number of such segments can be added, without changing the contrast function. \\
We therefore restrict the segmentation space to 
\begin{equation} \label{RestrictedSegSpace}
\cM^K_\star: = \bigcup_{\nubf \in \Upsilon^{K,n}_\star} \cM_\nubf^K(N),
\end{equation}
where 
\begin{equation*} 
\Upsilon^{K,n}_\star = \left\{ \text{$\nubf \in \Upsilon^{K,n}$, for $2\leq k \leq K-1$: if $\nu_k = 0$ then $\nu_{k-1}\neq 0$ and $\nu_{k+1} \neq 0$} \right\}
\end{equation*}
is of cardinality 
$$
\left| \Upsilon^{K,n}_\star \right| = \sum_{h = 1}^K \binom{n-1}{h-1} \binom{h+1}{K-h},
$$ 
with the convention $\binom{p}{q} = 0$ if $q > p$. In Figure \ref{fig:SpaceSegNew}, the forbidden segmentation subsets (i.e. the elements of $\Upsilon^{K,n} \setminus \Upsilon^{K,n}_\star$) are the white upper triangular regions labeled with $\nubf = (0, 0, 4)$ and $\nubf = (4, 0, 0)$, respectively.
}

Consequently, the optimal change-points in $\cM^K_\star$ are necessarily located at or just before an event $T_i$ or $T_i^-$. 
\change{Moreover, for obvious reasons, for $K>2$, the configuration $(\nu_k,\Delta {\tau}_k) = (0,0)$ is not taken into account. 
This configuration corresponds to cases where $\tau_{k-1}$ and $\tau_{k}$ belong to the same inter-event interval $\left [T_{\sum_{j = 1}^{k-1} \nu_j}, T_{\sum_{j = 1}^{k-1} \nu_j+1} \right )$ and $\tau_{k-1} = \tau_{k}$.}{}
The global optimization problem is thus reduced to a discrete optimization problem on the finite grid
$$
\tbfp = \{T_1^-, T_1, , T_2^-, T_2, \ldots ,T_n^-,T_n\},
$$
which has cardinal $2 n$. This problem is well known in the context of discrete change-point detection and can be solved using the classical dynamic programming (DP) algorithm. This algorithm and its principle are presented in \ref{app:DP} in a general context of discrete time detection and we then explain how to apply it on a fixed grid. In our case, the number of points of the grid is $2n$ so the complexity algorithm is $\cO(n^2 K)$. This algorithm can be used here thanks to the segment-additivity assumption \ref{hyp:additivity} of the contrast function $\gamma$.

\section{Proposed admissible contrasts}\label{sec:ContrastPP}
In this section, we present two admissible contrasts in the sense that they satisfy the two assumptions \ref{hyp:additivity} and \ref{hyp:concavity} and discuss the relevance of their optimal solutions.

First let consider the classical negative $\log$-likelihood Poisson contrast which is written, for a given $N$ path of the Poisson change-point model, a fixed $\nubf \in \Upsilon^{K,n}_\star$ and any $\taubf \in \mathcal{P}_\nubf^K(N)$, as follows
\begin{equation} \label{contrast:P}
 \gamma_P (\taubf)
 = -\log p_P(N ; \taubf, \w{\lambdabf}) = \sum_{k=1}^K  \nu_k \left  (  1-\log \left ( \frac{\nu_k}{  \Delta \tau_k} \right) \right)
 = \sum_{k=1}^K  C(\nu_k,\Delta \tau_k).
 \end{equation}
The independence property between disjoint time intervals in the Poisson process guarantees that the additivity Assumption \ref{hyp:additivity} is verified. 
It is then easy to see that the cost function $C$ is concave with respect to $\Delta \tau_k$, and that Assumption \ref{hyp:concavity} is valid. \change{}{However, the strict version of the concavity Assumption \ref{hyp:concavity} does not hold when $\nu_k = 0$.}

\change{
However, the Poisson contrast \eqref{contrast:P}, as the one based on least-squares, see \ref{app:contrasts}, inevitably leads to an optimal solution for $K>2$ that contains zero-length segments. The reason is easy to understand: the cost $C(1,0)= - \infty$. For obvious reasons, such solutions are undesirable. 
}{
Furthermore, because $C(1,0)= - \infty$, as soon as $K > 2$, this contrast inevitably yields segments with null length and count one, that is segments with boundaries located on both side of a single event: $(T_i^-, T_i]$. This also hold for the least-squares contrast (see \ref{app:contrasts}). Such solutions are clearly undesirable. 
}

\change{}{
One possible solution would be to impose the constraint $\Delta \tau_k > \epsilon$ for a certain $\epsilon > 0$. 
However, this raises two problems.
First, we must choose an appropriate value for $\epsilon$.
Second, if $\epsilon$ is sufficiently small, the solution will always lie on the boundary of each subset, simply shifted by $\epsilon$. 
To get around this problem, we propose adopting a Bayesian-type contrast in which hyperparameters play the role of regularization parameters. 
The problem of choosing $\epsilon$ is thus transferred to the choice of hyperparameters, but this approach allows us to obtain the exact solution that is not located at the boundaries.  
}
We assume that the intensities $\lambda_k$ are independent random variables and follow a Gamma distribution with parameters $a>0$ and $b>0$. For a fixed $\nubf \in \Upsilon^{K,n}_\star$ and all $\taubf \in \mathcal{P}_\nubf^K(N)$, the proposed contrast, the so-called Poisson-Gamma contrast, is
\begin{align} \label{contrast:PG}
  \gamma_{PG} (\taubf)
  & =- \log{p_{PG}(N ; \taubf; a, b)}= - \log{ \int  p_{PG}(N, \lambdabf ; \taubf, a, b)   \d \lambdabf}, \\
  & = \sum_{k=1}^K \left (-a \log{b} +\log{\Gamma(a)}  +\widetilde{a}_k \log{\widetilde{b}_k} - \log{\Gamma(\widetilde{a}_k)} \right)=\sum_{k=1}^K C(\nu_k,\Delta \tau_k), \nonumber 
 \end{align}
where $\widetilde{a}_k =  \nu_k+a$, $\widetilde{b}_k =  \Delta \tau_k+b$. The details of the derivation of this contrast are given in \ref{app:contrasts}. 
\change{This contrast also satisfies assumptions \ref{hyp:additivity} and \ref{hyp:concavity} and is therefore admissible.}{This contrast satisfies the additivity Assumption \ref{hyp:additivity} and the strict concavity Assumption \ref{hyp:concavity} and is therefore admissible.}
\change{In addition}{Indeed}, $C(1,\Delta \tau_k)$ is now lower-bounded, allowing another segmentation to be preferred to the undesirable previous ones. 

Note that $a$ and $b$ must be chosen in practice.  Since the mean of the Gamma distribution with parameters $(a,b)$ is $a/b$, a simple rule is to choose $a$ and $b$ so that $a/b=n$.

Moreover, the conditional distribution of $(\lambdabf \mid N;\taubf)$ is a Gamma distribution with parameters $\widetilde{a}_k$ and $\widetilde{b}_k$, thus its posterior mean is
\begin{equation} \label{PosteriorMeanLambda}
  \E(\lambda_k \mid N, \tau) = \widetilde{a}_k / \widetilde{b}_k.
\end{equation}
We thus obtain the estimator $\widehat{\lambda}_k = \widetilde{a}_k / \widetilde{b}_k$ and we can construct the estimator of the intensity process $(\lambda(t))_{t \in [0,1]}$. We will use this estimator in the final algorithm (see Section \ref{sec:selection}).

\section{Extension to marked Poisson process} \label{sec:MPP}
We extend the proposed methodology to a marked Poisson process for which both the intensity function of the underlying Poisson process and the parameter of the distribution of the marks are affected by the same changes.

\subsection{Model} \label{sec:model:MPP}

Let us consider a marked Poisson process. More specifically, we consider a Poisson process $N$ with a piecewise constant intensity function $\lambda$ given in Equation \eqref{eq:intensity} and suppose that a mark $X_i$ is associated with each event time $T_i$  ($1 \leq i \leq n = N_1$).  \\
We assume that the marks $\{X_i\}_{i=1,\ldots,n}$ are independent exponential random variables with parameter $\rho(T_i)$: $X_i | T_i \sim \mathcal{E}(\rho(T_i))$. The function $\rho$ is assumed to be piecewise constant with the same change-points as $\lambda$, for $t \in [0, 1]$:
\begin{equation} \label{eq:rho}
\rho(t)= \sum_{k=1}^K \rho_k \one_{I_k}(t), \quad \rhobf=(\rho_1, \ldots, \rho_K).
\end{equation}

\subsection{Admissible contrasts}

The $\log$-likelihood of a given marked Poisson path $(N, X)$ with piecewise intensity and piecewise marks distribution,  is given by
 \begin{align} \label{eq:logLikMarkedPoisson}
  & \log p_{MP}(N, X ; \taubf, \lambdabf, {\rhobf}) \\
  & \quad = \sum_{k=1}^K \left ( \Delta N_k \log(\lambda_k) - \lambda_k \Delta \tau_k+  \sum_{i, T_i \in I_k} \log p(X_i|T_i; \rho_k) \right). \nonumber
\end{align}
 It can be easily seen that, for a given $\taubf$, $\log p_{MP}(N,X ; \taubf, \lambdabf, {\rhobf})$ is maximal for $\widehat{\lambda}_k(\taubf) = {\Delta N_k}/ {\Delta \tau_k}$ and 
 $$
 \w\rho_k (\taubf)=  \frac{ \Delta N_k}{S_k}, 
 \qquad \text{with} \quad S_k = \sum_{i, T_i \in I_k} X_i,
 $$
 for $1 \leq k \leq K$. 
 The resulting contrast function for the estimation of the change-points can be written, for a fixed $\nubf \in \Upsilon^{K,n}_\star$ and for all $\taubf \in \mathcal{P}_\nubf^K(N)$, as 
\begin{align} \label{eq:contrastMPP}
 & \gamma_{MP} (\taubf) \\
 & \quad = -\log p_{MP}(N, X ; \taubf, \w{\lambdabf},\widehat{\rhobf}) =  \sum_{k=1}^K  \nu_k \left ( -\log{\left ( \frac{\nu_k}{\Delta \tau_k}\right )}-\log{\left ( \frac{\nu_k}{S_k}\right )}+2 \right). \nonumber
 \end{align}
As for the Poisson contrast,  this likelihood-based contrast is admissible but suffers of the limitation pointed out in the previous section. 

Following the same idea, we define a new Marked Poisson-Gamma-Exponential-Gamma (MPGEG) likelihood-based contrast.  Precisely, we assume that the $\lambda_k$'s and $\rho_k$'s are independent random variables and follow a Gamma distribution with parameters $a_\lambda>0$ and $b_\lambda>0$ and a Gamma distribution with parameters $a_\rho>0$ and $b_\rho>0$, respectively. For a fixed $\nubf \in \Upsilon^{K,n}_\star$ and for any $\taubf \in \mathcal{P}_\nubf^K(N)$, the MPGEG contrast is
\begin{align} \label{contrast:PGEG}
  \gamma_{MPGEG} (\taubf)
  & =-  \log{p_{MPGEG}(N ; \taubf , a_\lambda, b_\lambda, a_\rho, b_\rho)}, \nonumber \\
  & = -  \log{\iint  p_{PGEG}(N, \lambdabf,\rhobf ; \taubf , a_\lambda, b_\lambda, a_\rho, b_\rho) \  \d \lambdabf \  \d \rhobf}, \nonumber\\
  & =  \sum_{k=1}^K \left ( \widetilde{a}_{k, \lambda} \log{\widetilde{b}_{k, \lambda}} - \log{\Gamma(\widetilde{a}_{k, \lambda})} \right) +\left (  \widetilde{a}_{k, \rho} \log{\widetilde{b}_{k, \rho}} - \log{\Gamma(\widetilde{a}_{k, \rho})} \right) \nonumber\\
  & + K \left ( -a_\lambda \log{b_\lambda } +\log{\Gamma(a_\lambda )}  -a_\rho \log{b_\rho } +\log{\Gamma(a_\rho )} \right ), 
\end{align}
where $\widetilde{a}_{k, \lambda} =  \nu_k+a_\lambda$, $\widetilde{b}_{k, \lambda} =  \Delta \tau_k+b_\lambda$, $\widetilde{a}_{k, \rho} =  \nu_k+a_\rho$, and $\widetilde{b}_{k, \rho} = S_k+b_\rho$. The derivation of this contrast is detailled in \ref{app:contrasts}. This contrast function is also admissible and can avoid segmentations with zero-length segments.

The posterior means of $\lambda_k$ and $\rho_k$ are 
$$
  \widehat{\lambda}_k  = \widetilde{a}_{k,\lambda} / \widetilde{b}_{k,\lambda}, 
  \widehat{\rho}_k  = \widetilde{a}_{k,\rho} / \widetilde{b}_{k,\rho}.
$$
We use these posterior means as estimators of the intensity of the process and of the density parameter of the marks, respectively.

Similar to the PG contrast, the parameters $a_\lambda, b_\lambda, a_\rho, b_\rho$ need to be chosen. For the two first parameters,  we choose $a_\lambda=1$ and $a_\lambda/b_\lambda = n$. Then, because the conditional distribution of $X \mid N$ a Pareto distribution with parameter $(b_\rho, a_\rho)$ so $\E[X \mid N]= b_\rho/(a_\rho-1)$ with the condition $a_\rho>2$ (to ensure the existence of the variance). Thus, one may choose $a_\rho = 2.01$ and $b_\rho= \overline{X}(a_\rho-1)$.

\change{}{
\begin{remark}
We choose here the exponential distribution as an example: this setting work can easily be extended to other distributions for the marks.
In particular, one could consider categorical discrete random variables. In this case, the previous log-likelihood Formula \eqref{eq:logLikMarkedPoisson} still holds but the optimal parameters for the marks must be appropriately computed.   
\end{remark}
}
\section{Extension to self-exciting process}\label{sec:HP}
In this section, we show that the segmentation of a particular Hawkes-type process can be rewritten as the segmentation of a Poisson process after an adapted transformation.

\subsection{Model} 

We consider a self-exciting process $N$ with conditional intensity process denoted $\lambda(t)$ at time $t$, depending on the entire history before time $t$. The compensator or density of this process is represented by $\Lambda(t)$. As in the Poisson model, we denote the observed event times within the observation interval $[0,1]$ as $(T_i)_{i \geq 1}$.
We further assume that the intensity in each segment of the partition $\{I_k=[\tau_{k-1}, \tau_k)\}_{1\leq k \leq K}$ is the product of a linear exponential Hawkes intensity $\lambda_0(t)$ and a segment-specific constant. Specifically, the conditional intensity function is defined as follows:
\begin{equation}\label{eq:cond_int_php}
    \lambda(t)
     = \sum_{k=1}^K c_k \one_{[\tau_{k-1}, \tau_k)}(t) \lambda_0(t),
    \text{with } \lambda_0(t) = \left(1 + \sum_{T_i < t } \alpha e^{-\beta(t-T_i)}\right),
\end{equation}
where $(c_k)_{1\leq k \leq K}$ is a sequence of positive constants, and the parameters $\alpha$ and $\beta$ are positive constants, referred to as the increase rate and the decay rate respectively. For identifiability reasons, the immigration rate of the baseline intensity $\lambda_0$ is set to 1.
It gives the conditional risk of the occurrence of an event at $t$ given the realization of the process over $[0,t)$.
For the classical self-exciting Hawkes process ($c_k=1$ for all $k$), each arrival in the system instantaneously increases the arrival intensity by $\alpha$, then over time this arrival's influence decays at rate $\beta$. 
Here both the immigration rate and the influence of each event {on the events to come depend on the segment of arrival. 
The model \eqref{eq:cond_int_php} can describe the occurrences of self-exciting events, subject to switches of an underlying regime, such as as neuronal activity or crime activity, for example. 
We refer to this model as Piecewise-Hawkes-Process (PHP). 
  
We assume that the process $N$ satisfies the following assumption. 
\begin{ass}[Non-explosion assumption]\label{hyp:non-explosion} 
    The parameters of intensity of the process defined in \eqref{eq:cond_int_php} satisfy
    $$
    \frac{\alpha}{\beta} \max_{k} \{c_k\} < 1.
    $$ 
\end{ass}    
The condition comes from the fact that a standard Hawkes process with conditional intensity $c_{\max}\lambda_{0}$ doesn't explode if and only if   $\max_{k} c_k {\alpha}/{\beta} < 1$. Hence, as 
$$
\lambda(t) \leq \max_{k} c_k\lambda_{0}(t),
$$ 
Assumption \ref{hyp:non-explosion} is sufficient to insure that $\E[N_t]< \infty$ and $\E[ \lambda(t)]< \infty.$
Some details about simulation algorithm through thinning are given in \ref{app:simuHP}.
Let us denote for the Hawkes process $N_0$ with conditional intensity $\lambda_0$(t), the
compensator $\lambda_0$, given by the following equation, 
\begin{equation}\label{eq:transformation_compensator}
    \Lambda_0(t) := \int_{0}^{t} \lambda_0(s) ~\d s = t 
    + 
    \frac{\alpha}{\beta} 
    \sum_{T_i <t} 
    \left(
        1- e^{-\beta (t-T_i)}
    \right).
\end{equation}
The link between the process $N$, defined by \eqref{eq:cond_int_php}, and $N_0$ is given in Proposition \ref{prop:link_compensators}.
\begin{prop} \label{prop:link_compensators}
For any $k \in \{1, \dots, K\}$ and for $t \in [\tau_{k}, \tau_{k+1})$, we have $\lambda(t) = c_k \lambda_0(t)$ and
\begin{align}\label{eq:compensator_exponential_hawkes_like}
\Lambda(t) &= 
        c_{k+1} 
        \left[ 
            \Lambda_{0}(t) - \Lambda_{0}(\tau_{k})
        \right]
        +
        \sum_{j=1}^{k} 
        c_j 
        \left[ 
            \Lambda_{0}(\tau_j) - \Lambda_{0}(\tau_{j-1})
        \right].
    \end{align}
\end{prop}
We are now able to give method of change-point detection in this model. 

\subsection{Change-point detection strategy}

We now describe a strategy to detect the change-points in PHP, which consists in transforming it back into a Poisson process, to which our methodology applies. 

\paragraph{Conversion to a piecewise Poisson process} 
The transformation we employ is based on a well-established result for point processes: the random time-change or time-scaling theorem, originally presented in \cite{papa} and discussed in works such as \cite{daleyII}.
Theorem \ref{th:Modified_time_rescaling} provides an adapted version of this theorem specifically for the considered piecewise version of the Hawkes process that we are considering.
\begin{theo}[Modified time-rescaling]
\label{th:Modified_time_rescaling}
Let $(T_i)_{i= 1}^N$ be the event times observed over $[0,1]$ of a PHP  with a conditional intensity $\lambda$ defined by Equation \eqref{eq:cond_int_php}, and satisfying assumption~\ref{hyp:non-explosion}. Then, the sequence $(\Lambda_0(T_i))_{i}$ is a Poisson process with a piecewise constant intensity function $\lambda_{P}$ defined as
\begin{equation} \label{eq:tranformed_poisson_intensity}
    \lambda_{P} (t) := \sum_{k=1}^K c_k \mathds{1}_{[\Lambda_0(\tau_{k-1}), \Lambda_0(\tau_{k}))}(t).
 \end{equation}
Furthermore, the likelihood of the exponential Hawkes-type process is
 \begin{equation} \label{eq:tranformed_poisson_loglik}
    p_H(N;\taubf, \cbf, \alpha,\beta ) 
    = \left(\prod_{i=1}^n \lambda_0(T_i) \right)
    \left(\prod_{k=1}^K c_k^{n_k} e^{-c_k (\Lambda_0(\tau_k) - \Lambda_0(\tau_{k-1}))}\right).
\end{equation}
\end{theo}
The proof is relegated in Section \ref{app:proofs}. 
\change{}{
For this results, the shape of the conditional intensity of the jump process that we have considered in Equation~\eqref{eq:cond_int_php} is crucial. Indeed, the fact that the parameters all changes through a multiplicative factor from one segment to the other is a key argument in proving Theorem \ref{th:Modified_time_rescaling} (see Equation \eqref{eq:joint_density_events1} in Section \ref{th:Modified_time_rescaling}), and thus justifying the procedure. 
}
Let us also notice that, because $\Lambda_0$ is a strictly increasing function \citep[see][Definition 4]{LaubAL2015}, the ordering of the event times remains unchanged before and after transformation. Consequently, 
for each $k \in \{0, \dots, K-1\}$, the number of events for the original Hawkes-type process within $[\tau_k, \tau_{k+1})$ is the same as the number of events for its Poisson version within $[\Lambda_0(\tau_k), \Lambda_0(\tau_{k+1}))$. Let us finally mention that the transformed process is not in $[0,1]$ anymore but in the random time interval $[0, \Lambda_0(T_{N[0,1]})]$.
Thus, using Theorem \ref{th:Modified_time_rescaling}, we transfer the task of determining the change-points $(\tau_k)_k$ from $N$ to a change-point detection problem on the Poisson process $(\Lambda_0(T_i))_{i}$ on which the proposed method can be applied. Because the first bracket of the likelihood \eqref{eq:tranformed_poisson_loglik} does not depend on the $\tau_k$'s nor on the $c_k$'s, the maximization w.r.t the $\tau_k$ of the obtained Poisson likelihood
$$
\prod_{k=1}^K c_k^{n_k} e^{-c_k (\Lambda_0(\tau_k) - \Lambda_0(\tau_{k-1}))}
$$
is equivalent to the maximization of the Hawkes-like likelihood \eqref{eq:tranformed_poisson_loglik}, for fixed $\alpha$ and $\beta$.
\change{}{Besides,} as seen Section \ref{sec:ContrastPP}, the optimal segmentation according to the Poisson contrast of the transformed process may contain segments with null length, so, in practice, we will use the Poisson-Gamma contrast in place of the Poisson contrast.

\paragraph{Estimation of $\alpha$ and $\beta$}
The transformation of an exponential PHP into a heterogeneous Poisson process requires the knowledge of the parameters $\alpha$ and $\beta$. \changeNew{We describe here two possible strategies.}{We distinguish two situations.} \changeNew{First, we may assume that}{In the first situation,} the process goes through a homogeneous period, that is a period $[0, T_{\rm learn}]$ with no change-point. This is the case in some neuronal analyses \citep[see, e.g.][]{stefano2023heterogeneous}, where the spiking activity of a mouse is recorded during activity but also during sleep phases, which can therefore be considered as homogeneous.
\changeNew{We}{In this case, we} propose to estimate the parameters $\alpha$ and $\beta$ on this homogeneous period via, for example, maximum likelihood, and to plug them into the compensator \eqref{eq:transformation_compensator}; denoting $\w{\theta}= (\w{\alpha}, \w{\beta})$, we thus define
$$
\w{\Lambda}_0(t)
:= \Lambda_{0,\w{\theta}}(t)
= 1 + \frac{\w\alpha}{\w\beta} \sum_{T_i <t} \left(1- e^{-\w\beta (t-T_i)}\right).
$$
Algorithm~\ref{algo:change_pts_detection_php} gives the pseudo-code of the PHP change-points detection algorithm (CDPHP) following this strategy.

\begin{algorithm}[H]
    \caption{change-points detection for PHP (CDPHP)}
    \label{algo:change_pts_detection_php}
    \begin{algorithmic}[1]
      \State{\textbf{input} 
      \begin{itemize}
          \item $(S_k)_{k=1}^{N_0([0,T_{\rm learn}])}$ a sample from a HP of conditional intensity $\lambda_0$ as defined in Equation~\eqref{eq:cond_int_php}
          \item $(T_i)_{i=1}^{N([0,1])}$ a sample from a PHP of conditional intensity $\lambda$ as defined in Equation~\eqref{eq:cond_int_php}
      \end{itemize}}
      \State \textbf{estimate} $\theta$ the parameters of $\lambda_{0}$ based on $(S_k)_{k=1}^{N_0([0,T_{\rm learn}])}$
      \State \textbf{compute} $(\w{\Lambda}_{0}(T_i))_{i=1, \ldots, N([0,1])}$ (a sequence on $[0,\w{\Lambda}_{0}(T_N)]$)
      \State \textbf{find} the set of the $K$ indexes $I_K^N$ of the change-points in the sample $(\w{\Lambda}_{0}(T_i))_{i=1}^N$ using a change-point detection method for an inhomogeneous-PP (re-scaled on $[0,1]$)
     \State {\textbf{return } $\w{\tau}_{1}, \ldots, \w{\tau}_K=(T_j)_{j \in I_{K}^N}$ }
    \end{algorithmic}
\end{algorithm}

\changeNew{Alternatively, when}{The second situation occurs when} no homogeneous period is available. \changeNew{}{Then } we proposeto fix $\beta$ and to run the change-point detection method with a grid of values of $\alpha$, keeping the value of $\alpha$ that maximizes the likelihood function. Several values for $\beta$ can then be investigated, even if its value is less sensitive. We use this latter strategy in the application presented in Section \ref{sec:appHP}. 
\change{}{This procedure is natural if only one sample is available. Moreover, one could consider first estimating these parameters over the whole trajectory, assuming initially that there is no change-point. Nevertheless, a preliminary empirical study, not shown, indicates that this approximation is too rough and biases the procedure too much.}

\begin{remark}[Other piecewise point processes.]
    The strategy presented in this section, not only applies to exponential Hawkes processes.
    Indeed, considering a point process with conditional intensity $\lambda_0(t)$, any piecewise version of it with conditional intensity $\lambda(t)$ with the form given in Equation \eqref{eq:cond_int_php} can be segmented with the proposed approach, due to the time-scaling theorem.
\end{remark}

\section{Choosing the number of segments}\label{sec:selection}

Now, let us discuss the selection of the number of segments $K$. To address this, we suggest employing a cross-validation strategy that leverages a property of the Poisson process, which we outline for the reader below.

\begin{propri}[Thinning] \label{prop:thinning}
  Consider a heterogeneous Poisson process $N$ with intensity function $\lambda(t)$: $N = \{N_t\}_{0 \leq t \leq 1} \sim PP(\lambda)$. Sampling each event time of $N$  with probability $f$ results in a heterogeneous Poisson process $N^A$ with intensity function $\lambda^A(t) = f \lambda(t)$. Furthermore, the remaining fraction of event times forms a second heterogeneous Poisson process $N^B$ with intensity function $\lambda^B(t) = (1-f) \lambda(t)$, and the processes $N^A$ and $N^B$ are independent. \\
\end{propri}

This thinning property has two important consequences. First, if the intensity function $\lambda$ is piecewise constant, then the intensity functions $\lambda^A$ and $\lambda^B$ are also piecewise constant, {\sl with same change-points as $\lambda$}. Second, whatever the form of the intensity function of $N$, the ratio between the intensity functions of $N^B$ and $N^A$ is constant and equal to $\lambda^B(t)/\lambda^A(t) \equiv (1-f)/f$. This suggests the following cross-validation procedure: ($i$) sample events from the observed process $N$ with probability $f$ to form a {\sl learning process $N^L$} and form an independent {\sl test process $N^T$} with the remaining events; ($ii$) for a series of values of $K$, get estimates $(\widehat{\taubf}^{K, L}, \widehat{\lambdabf}^{K, L})$ of the change-points and intensities, respectively, using the learning process $N^L$, ($iii$) evaluate the contrast on the test process $N^T$ with parameters $(\widehat{\taubf}^{K, L}, \frac{1-f}{f} \widehat{\lambdabf}^{K, L})$. 
The whole cross-validation procedure is summarized in \ref{app:cv}, Algorithm \ref{algo:cv}.

\paragraph{Some comments} \change{Few comments}{A few comments} can be made about this procedure.
\begin{itemize}
\item First, Algorithm \ref{algo:cv} involves three tuning parameters: the number of cross-validation sample $M$, the maximum number of segments $K_{\max}$ and the sampling probability $f$. The first two are limited only by the computational burden and can be as large as desired. The sampling probability $f$ was set to $4/5$ in this study. Figure \ref{fig:simPP-vRobust} in \ref{app:appSimuls} shows that this choice yields good performances, as compared to other typical ones.
\item Second, we use the Poisson-Gamma contrast to estimate the parameters $\taubf$ and $\lambdabf$ because it is admissible as explained in Section \ref{sec:ContrastPP}. Still, because the undesirable properties of the Poisson likelihood have no effect when used to measure the fit of $\widehat{\taubf}$ and $\widehat{\lambdabf}$ parameters to an independent process, we use the standard Poisson contrast to evaluate this fit. 
\item Finally, but most importantly, cross-validation is generally not applicable when dealing with discrete time segmentation problems. Indeed, in this case, it consists in eliminating observation times, which makes the location of the estimated change-point unclear with respect to the complete dataset. The situation is different in the continuous time setting we consider, thanks to the thinning Property \ref{prop:thinning}.
\end{itemize}

\paragraph{Practical implementation}
In practice, to perform the segmentation of an observed process $N$, we propose to first determine $\widehat{K}$ using the algorithm \ref{algo:cv}, then to estimate $\taubf$ using the Poisson-gamma contrast and $\lambdabf$ as the posterior mean on the whole process $N$:
\begin{align*}
  \widehat{\taubf} & =  \argmin{\nubf \in  \Upsilon^{\w{K},n}_\star}   \min_{\taubf \in \cM_\nubf^{\w{K}}(N)}   \gamma_{PG} (\taubf), \\
  \widehat{\lambda}_k &  = \E(\lambda_k | N, \widehat{\taubf}) \quad \text{for all } k = 1, \dots, \widehat{K}.
\end{align*}

The adaptation of the algorithm \ref{algo:cv} to the segmentation of a marked Poisson process is straightforward, replacing the Poisson-Gamma contrast with the marked Poisson-Gamma-Expo-Gamma contrast and the Poisson contrast with the marked Poisson contrast, respectively.

\section{Simulation study} \label{sec:simu}
We present in this section a simulation study to evaluate the performances of the proposed methodology.
In Section \ref{sec:simudesign} we explain the investigated cases, and in Section \ref{sec:simuresults} we give the result when the simulation is done under the Poisson model, the marked-Poisson model or the Hawkes model (PHP). 

\subsection{Simulation design, quality criteria \changeNew{}{and benchmark}}\label{sec:simudesign}


\paragraph{Poisson process} We use a simulation design for the change-points inspired from \cite{Chakar2017}. We set the number of segments to $K = 6$, with $\taubf = [0, 7, 8, 14, 16, 20, 24]/24$ change-point locations, i.e. with lengths $\Delta\taubf = [7, 1, 6, 2, 4, 4]/24$. The total length of odd segments ($k = 1, 3, 5$) is $\Delta\tau_- = 17/24$ and the total length of even segments is $\Delta\tau_+ = 7/24$. The intensity is set to $\lambda_-$ in odd segments and to $\lambda_+$ in even segments, based on the two following parameters:
\begin{description}
 \item[$\lambdabar$]: mean intensity: $\lambdabar = \int_0^1 \lambda(t) \d t = (\lambda_- \Delta\tau_- + \lambda_+ \Delta\tau_+)$. It controls the expected number of events. A small value for $\lambdabar$ yields a very scarce signal, i.e. a very poor available information. We consider the values $\lambdabar =$ 32, 56, 100, 178, 316, 562 and 1000.
 \item[$\lambda_R$]: ratio between even and odd intensities: $\lambda_R := \lambda_+ / \lambda_- \geq 1$. It controls the contrasts between successive segments. Note that $\lambda_R = 1$ actually yields a single segment with intensity $\lambdabar$. We consider the values $\lambda_R=$ 1, 2, 3, 4, 6, 8, 11 and 16. 
\end{description}
The values of the intensities $\lambda_-$ and $\lambda_+$ are then $\lambda_- =\lambdabar / \left(\Delta\tau_- + \lambda_R\Delta\tau_+\right)$ and $\lambda_+ =\lambda_R \lambda_-$. 
Examples of such piecewise intensity functions $\lambda$ are given in \ref{app:simulation}, Figure \ref{fig:lambda} for $\bar{\lambda}=100$ and three values of $\lambda_R$. The values of $\lambda_-$ and $\lambda_+$ deduced for all the considered values of the mean intensity $\lambdabar$ and the ratio $\lambda_R$ are given in Table \ref{table:lp_lm} in Appendix \ref{app:simulation}, together with a measure, called SNR, we defined to evaluate the difficulty of the detection problem.

The maximum number of segments is $K_{\max} = $12. We use $M = $500 samples for the cross-validation procedure and the sampling probability is set to $f = $4/5. $B = $100 $N$ processes are sampled with each parameter configuration. For the Poisson-Gamma contrast, we take the hyper-parameters $a=1$ and $b = 1/n$, where $n$ is the number of events in the process to be segmented.

\paragraph{Comparison with other approaches}
We found no available implementation for segmenting a continuous time point process. A first alternative and natural approach is to resort to a discrete time methodology, either by dividing the time interval into bins, or by considering the sequence of inter-event durations. As a consequence, we considered two alternative methods.
\begin{enumerate}
\item Discretization: we divided the time interval $(0, 1)$ into bins of width $\delta = (c n)^{-1}$ and associated the difference $N_{j \delta} - N_{(j-1)\delta}$ with each bin $j = 1, \dots, c n$. Each difference then follows a Poisson distribution. We considered four values for the discretization parameter $c$: $c \in\{0.5, 1, 2, 4\}$.
\item Inter-event times: we formed the sequence of inter-event times $(\Delta T_i = T_{i+1} - T_i)_i$, for $i = 1, \dots, n$, setting $T_0 = 0$. Each duration then follows an exponential distribution. 
\end{enumerate}
In both case, we end up with a sequences of random variables, with respective parametric distributions (Poisson or exponential), in the parameter of which (discrete time) change points can be estimated.
Both models are implemented in the R package \texttt{changepoint} \changeNew{}{\citep{KiE14}}, which we used with the PELT algorithm and the mBIC penalty.
\changeNew{}{A comparison is given through Figure~\ref{fig:simOther-HausL2cum}.}


\paragraph{Marked Poisson process} 
The aim here is to see if the additional change-point information carried by the mark process helps the detection, and to know which of the two processes (the Poisson process and the mark process) allows better detection of the change-points. We use the same simulation design as previously for the change-point locations, fixing $\overline{\lambda}=100$. We consider two values for $\lambda_R$: $\lambda_R=1$ where the ground intensity of the Poisson process does not change from a segment to another, no signal in $\lambda$, and $\lambda_R=8$ where the change in the intensity is marked, signal in $\lambda$. For each value of $\lambda_R$, the parameter of the mark distribution $\rho(t)$ is either constant and equals to $0.1$, no signal in $\rho$, or alternates between $0.1$ and $0.005$ for the odd and the even segments respectively, signal in $\rho$. This leads to four scenarios in terms of detection according to the existence, or not, of changes in either $\lambda$ or $\rho$.

\paragraph{Piecewise Hawkes-type process}
We study the performance of the change-point detection in a piecewise Hawkes-type process, described in Section \ref{sec:HP}. 
We use the same segmentation $\taubf$ as in the Poisson case and transfer the difficulty of the change-point detection task on the values of $c_k$ involved in the intensity \eqref{eq:cond_int_php}. The parameters $c_k$ are set to $c_-$ in odd segments and to $c_+$ in even segments; we considered three values for the ratio $R = c_+ / c_-$:  $R \in \{2,3,6\}$. We fixed $\alpha=0.5$, $\beta=500$ and $c_+=500$ (thus, for $R = 2, 3, 6$, we have $c_- = 333, 250, 142.8$, respectively). By analogy with the stationary regime on each interval (under which the expected number of points is $c_k/(1-c_k \alpha/\beta)$), these parameters correspond to the pink curve displayed for the Poisson process on Figure \ref{fig:simPP-HausL2cum} ($\bar{\lambda}=562$).

We investigate the case where the parameters $(\alpha, \beta)$ are known (named "known" in the right panel of Figure \ref{fig:hawkes}) and the case where they are estimated using maximum likelihood on a homogeneous period with length $T_{\rm learn} \in \{1, 10\}$ and intensity $c_0 \lambda_0(t)$, where $c_0 = 100$.


\paragraph{Quality criteria} 
The performances are assessed according to the following criteria.
\begin{itemize}
  \item[-] The selected number of segments $\hat{K}$ (obtained with Algorithm \ref{algo:cv}). 
  \item[-] The Hausdorff distance $d(\taubf, \widehat{\taubf})$ between the true change-point locations $\taubf$ and the estimated ones $\widehat{\taubf}$ (with possibly different number of change-points, as in \cite{Chakar2017}). More specifically, defining
  \begin{align*}
  d_1(\taubf, \widehat{\taubf}) &= \max_k \min_\ell |\tau_k- \widehat{\tau}_\ell|, \qquad
  d_2(\taubf, \widehat{\taubf}) = d_1(\widehat{\taubf},\taubf), \\
  d(\taubf, \widehat{\taubf})  &= \max(d_1(\taubf, \widehat{\taubf}), d_2(\taubf, \widehat{\taubf})),
  \end{align*}
  $d_1$ indicates if each true change-point $\tau_k$ is close to an estimated one $\widehat{\tau}_\ell$ ($d_1$ will typically be small when $\widehat{K} \gg K$), whereas $d_2$ indicates if each estimated change-point $\widehat{\tau}_\ell$ is close to a true one $\tau_k$. A perfect segmentation results in both null $d_1$ and $d_2$ (and $d = 0$). 
  \item[-] The relative $L^2$-norm between the estimated and the true cumulative intensity functions. More specifically, denoting $\lambda(t)$ and $\widehat{\lambda}(t)$ the true and estimated intensity functions respectively and $\Lambda(t) =\int_0^t \lambda(u) \d u$ and  $\widehat{\Lambda}(t) =\int_0^t \widehat{\lambda}(u) \d u$ the corresponding cumulative intensity functions, we compute:
  $$
  \ell^2(\Lambda, \widehat{\Lambda}) = \left. \left( \int_0^1 (\widehat{\Lambda}(t) - \Lambda(t))^2 \d t \right) \right/ \lambdabar.
  $$
\end{itemize}

\subsection{Results}\label{sec:simuresults}

\subsubsection{Poisson process}
\paragraph{Model selection}
The left panel of Figure \ref{fig:simPP-HausL2cum} shows the mean number of the selected number of segments $\w{K}$ as a function of $\lambda_R$ (a measure of the difficulty of the task). The correct number of segments $K=6$ is better recovered when either the mean intensity $\lambdabar$ or the ratio $\lambda_R$ increases, as expected. More precisely, for a high mean intensity ($\lambdabar = 1000$), the correct number of segments $K = 6$ is recovered as soon as the ratio $\lambda_R$ reaches 3. In the typical case where $\lambdabar = 100$, the correct number of segments is obtained when $\lambda_R$ is about 10. When $\lambda_R = 1$, for all $\lambdabar$, the mean number of the selected number of segments $\widehat{K}$ is found to be 1, which is actually the correct number (as the intensity is constant in this case). When $\lambda_R$ is slightly greater than one, especially for small mean intensities $\lambdabar$, the model selection procedure tends to underestimate the number of segments. This behavior is classical and desired to avoid false detection \citep[see e.g.][]{cleynen2017model}.


\paragraph{Accuracy of the change-points}
The left panel of Figure \ref{fig:simPP-HausL2cum} represents the average of the Hausdorff distance as a function of the ratio $\lambda_R$. It shows the expected behavior in terms of accuracy of the locations of the estimated change-points. The Hausdorff distance $d(\taubf, \widehat{\taubf})$ decreases as either the mean intensity $\lambdabar$ or the ratio $\lambda_R$ increases. The distance is almost zero with a mean intensity $\lambdabar = 1000$ and a ratio $\lambda_R = 3$. In the typical case where $\lambdabar = 100$, the distance decreases very quickly as $\lambda_R$ increases, but remains above $.05$, meaning that some uncertainty remains about the precise locations of the change-point. Note that when $\lambda_R = 1$, the Hausdorff distance is almost 0 for all values of $\lambdabar$, simply because the selected number of segments is almost always 1, giving $\widehat{\taubf} = \{0, 1\}$, which is equal to the true $\taubf$.


\begin{figure}[hbtp]
  \begin{center}
    \includegraphics[width=.32\textwidth, trim=10 10 10 50, clip=]{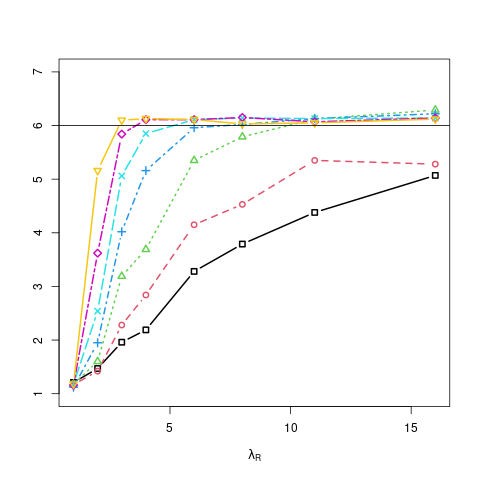}
    \includegraphics[width=.32\textwidth, trim=10 10 10 50, clip=]{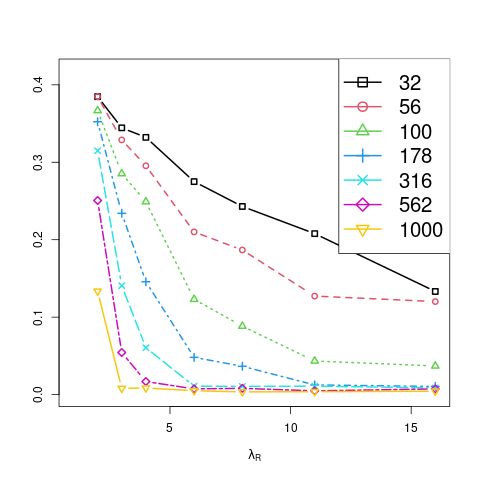}
    \includegraphics[width=.32\textwidth, trim=10 10 10 50, clip=]{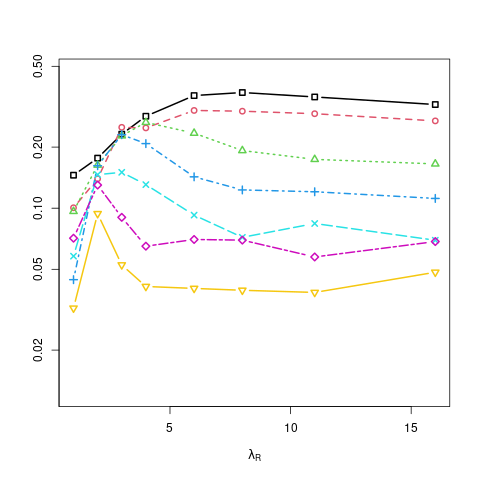}
    \caption{
    \changeNew{}{Left: Mean selected number of segments $\widehat{K}$ (averaged over $B=100$ replicates) as a function of the intensity ratio $\lambda_R$. 
    Center: Mean Hausdorff distance $d(\taubf, \widehat{\taubf})$ as a function of the ratio $\lambda_R$. 
    Legend panel = value of the mean intensity $\lambdabar$ (same for all panels). 
    Right: Mean relative distance $\ell_2(\Lambda, \widehat{\Lambda})$ between the true and estimated cumulated intensity function $\Lambda(t)$ as a function of the ratio $\lambda_R$.} 
    \label{fig:simPP-HausL2cum}
    }
  \end{center}
\end{figure}

The Hausdorff distance gives a synthetic measure of the proximity between the estimated change-points $\widehat{\taubf}$ and the true ones $\taubf$. To better illustrate the accuracy of the position of the estimated change-points, for a given configuration $(\lambdabar, \lambda_R)$, we gather all the detected change-points resulting from the $B=100$ simulations. Figure \ref{fig:simPP-HistTau} displays the distribution of the estimated locations for three configurations, chosen according the results displayed in Figure \ref{fig:simPP-HausL2cum} (center): $\lambdabar=56, \lambda_R = 6$ (hard setting), $\lambdabar=100, \lambda_R = 8$ (intermediate setting), $\lambdabar=316, \lambda_R = 11$ (easy setting). The figure shows how the estimated $\widehat{\tau}_k$ concentrates around the true $\tau_k$ as the signal becomes more contrasted. \\
\change{}{When $\lambda_R=1$, the actual number of segments is equal to 1, so we do not show the Hausdorff distance in Figure 1 in this case. However, this configuration remains interesting in terms of the proportion of false alarms, i.e., when the estimated number of segments $\widehat{K}$ is greater than one. This proportion is fairly stable in relation to the average signal: for the seven values of $\lambdabar$ (ranging from 32 to 1000), we obtain false alarm proportions between $7\%$ and $16\%$ (mean = $13.1\%$, median = $14\%$), with no systematic trend.}\\
Note that the simulation procedure includes the selection of the number of segments $K$, so the different simulations do not provide the same number of estimated change-points $\widehat{\tau}_k$, which explains why the total number of estimated change-points varies from one configuration to another. 

\begin{figure}[hbtp]
  \begin{center}
    \begin{tabular}{ccc}
      $\displaystyle{\begin{array}{c} \lambdabar=56 \\\lambda_R = 6 \end{array}}$
      & &
      \hspace{-.05\textwidth}
      \begin{tabular}{c}
        \includegraphics[width=.7\textwidth, trim=25 0 50 50, clip=]{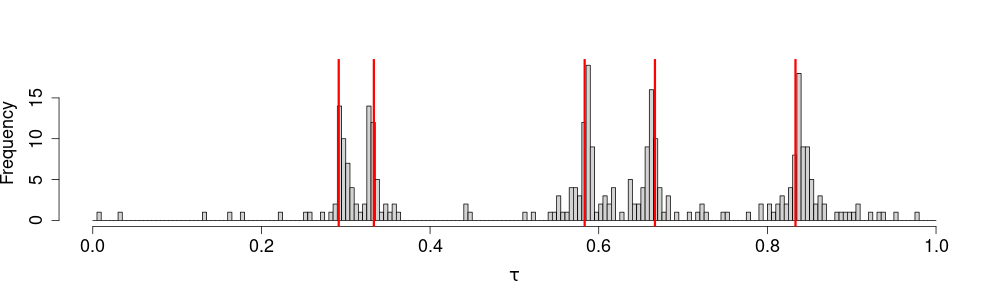} \\    
      \end{tabular}
      \\ \hline 
      $\displaystyle{\begin{array}{c} \lambdabar=100 \\\lambda_R = 8 \end{array}}$
      & &
      \hspace{-.05\textwidth}
      \begin{tabular}{c}
        \includegraphics[width=.7\textwidth, trim=25 0 50 50, clip=]{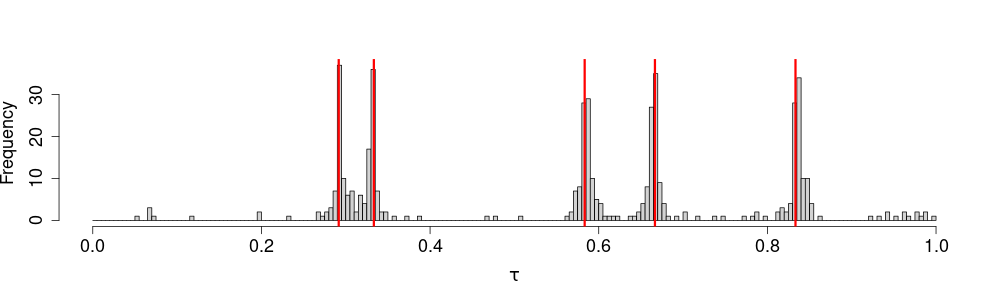} \\
      \end{tabular}
      \\ \hline 
      $\displaystyle{\begin{array}{c} \lambdabar=316 \\\lambda_R = 11 \end{array}}$
      & &
      \hspace{-.05\textwidth}
      \begin{tabular}{c}
        \includegraphics[width=.7\textwidth, trim=25 0 50 50, clip=]{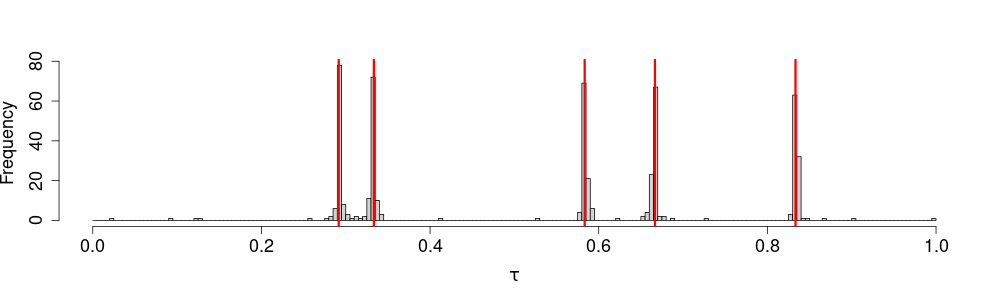} 
      \end{tabular}
    \end{tabular}
    \caption{Distribution of the estimated change-point locations $\widehat{\tau}_k$ accross the $B = 100$ simulations for three different settings. 
    Top: $\lambdabar=56, \lambda_R = 6$, center: $\lambdabar=100, \lambda_R = 8$; bottom: $\lambdabar=316, \lambda_R = 11$. 
    Vertical red lines: true change-point locations $\taubf$.
    \label{fig:simPP-HistTau}}
  \end{center}
\end{figure}

\paragraph{Estimation of the intensity function}
The right panel of Figure \ref{fig:simPP-HausL2cum} represents the mean relative $\ell_2$ distance between the true and the estimated cumulated intensity function,  as a function of the ratio $\lambda_R$.  It shows that the estimation of the cumulated intensity function $\Lambda$ improves as the mean intensity $\lambdabar$ increases. More interestingly, the ratio $\lambda_R$ does not seem to strongly affect the accuracy of $\widehat{\Lambda}$. A possible explanation is that although a strong contrast between the intensity of neighboring segments gives better localization of the change-points, even a small error in the change-point location induces a high error in terms of $\lambda(t)$ or $\Lambda(t)$.

\paragraph{Comparison with alternative approaches}

\begin{figure}[hbtp]
    \begin{tabular}{m{.03\textwidth}m{.27\textwidth}m{.27\textwidth}m{.27\textwidth}}
        $\lambdabar$ & \multicolumn{1}{c}{$\widehat{K}$} & \multicolumn{1}{c}{Hausdorff} 
        & \multicolumn{1}{c}{Intensity} \\
        \hline
        $32$
        &
        \includegraphics[width=.3\textwidth, trim=10 10 10 50, clip=]{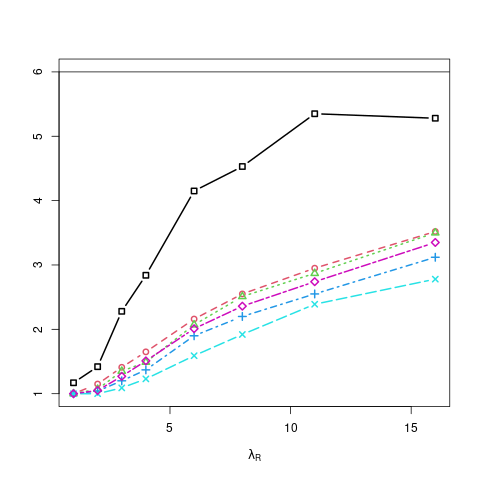}
        &
        \includegraphics[width=.3\textwidth, trim=10 10 10 50, clip=]{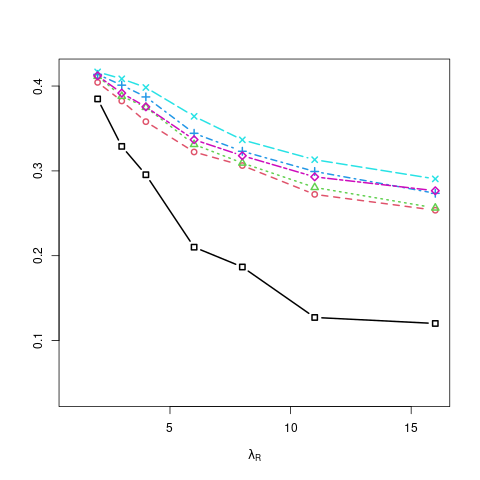}
        &
        \includegraphics[width=.3\textwidth, trim=10 10 10 50, clip=]{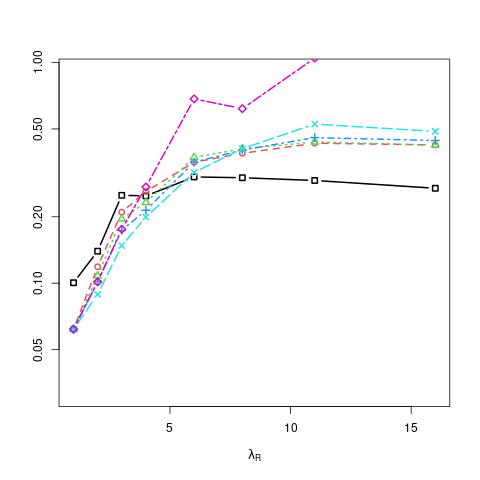} \\
        $178$
        &
        \includegraphics[width=.3\textwidth, trim=10 10 10 50, clip=]{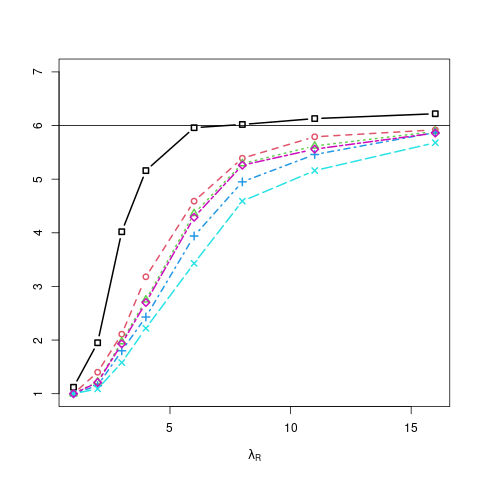}
        &
        \includegraphics[width=.3\textwidth, trim=10 10 10 50, clip=]{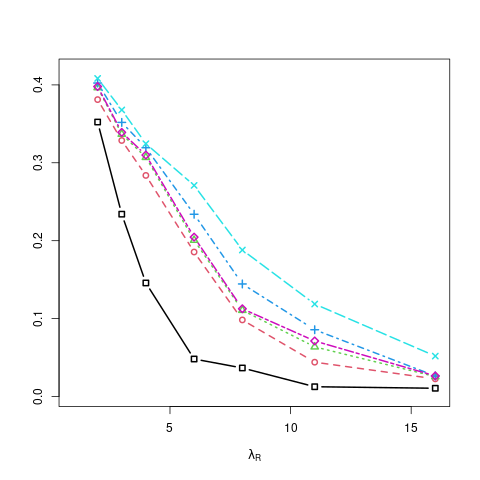}
        &
        \includegraphics[width=.3\textwidth, trim=10 10 10 50, clip=]{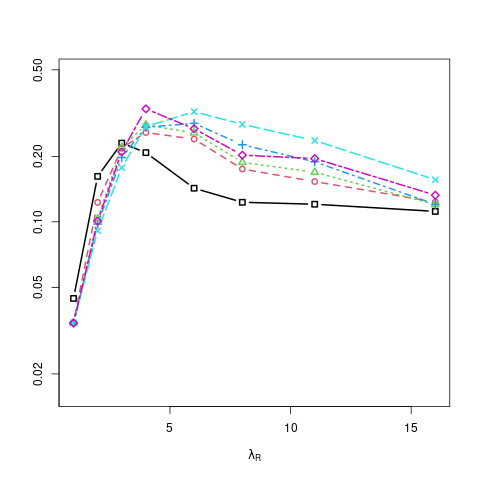} \\
        $562$
        &
        \includegraphics[width=.3\textwidth, trim=10 10 10 50, clip=]{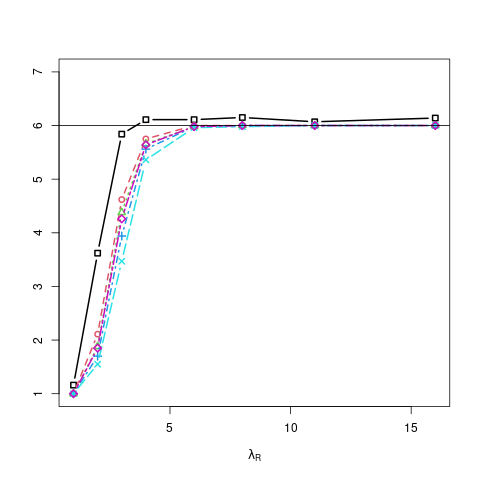}
        &
        \includegraphics[width=.3\textwidth, trim=10 10 10 50, clip=]{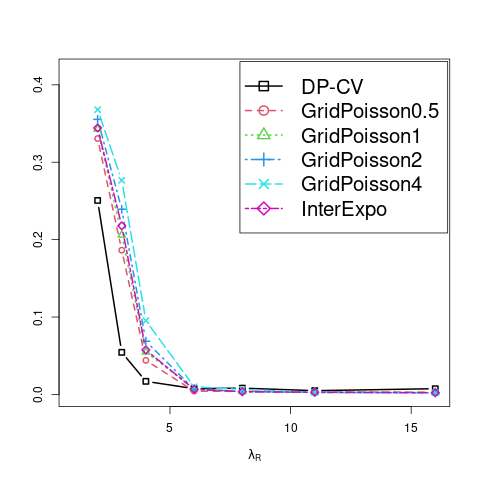}
        &
        \includegraphics[width=.3\textwidth, trim=10 10 10 50, clip=]{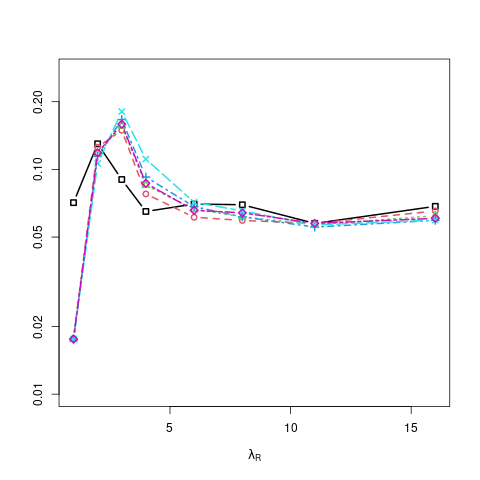}
    \end{tabular}
    \caption{
    \changeNew{}{Each line represents the same quality criteria as in Figure \ref{fig:simPP-HausL2cum} (from left to right: mean selected number of segments $\widehat{K}$, mean Hausdorff distance $d(\taubf, \widehat{\taubf})$, mean relative distance $\ell_2(\Lambda, \widehat{\Lambda})$, as functions of the intensity ratio $\lambda_R$)  for $\bar \lambda=52$ (first line), $\bar \lambda =178$ (second line) and $\bar \lambda=562$ (third line). 
    Legend panel (same for all): DP-CV (for Dynamic Programming \& Cross-Validation) is our method, GridPoisson is the change-point detection in Poisson distribution for $N$ observed on the four discrete grid and InterExpo is the change-point detection in an exponential distribution for $(\Delta T_i)_i$.}
    \label{fig:simOther-HausL2cum}
    }
\end{figure}

Figure \ref{fig:simOther-HausL2cum} shows a comparison between the proposed continuous time approach and discrete time alternatives \changeNew{}{described at the end of Section~\ref{sec:simudesign}. We name \texttt{InterExpo} the PELT algorithm applied on the increments of the process, and \texttt{GridPoissonA} with $A \in \{0,1,2,4\}$ the method based on discretization of time with $c\in\{0.5, 1, 2, 4\}$.
}
We observe that when the number of events is low ($\bar{\lambda}=32$), all methods tend to underestimate the number of change-points, resulting in a poor Hausdorff distance. 
However, the continuous time method \texttt{DP-CV}, in black, offers the best results for all configurations. 
All methods improve as the average signal increases ($\bar{\lambda} = 178$ and $562$), but continuous time segmentation remains more accurate than discrete time segmentations. 
The results are much less contrasting in terms of cumulative intensity function estimation. 
Interestingly to note that the level of discretization (encoded in the parameter $c$) has a limited effect on the average performance of discrete Poisson approaches, and the exponential model gives results very similar to those of the discrete time Poisson model.

\subsubsection{Marked Poisson process}

\begin{table}[hbtp]
  \caption{Mean selected number of segments $\w{K}$ and mean Hausdorff distance $d(\taubf, \w{\taubf})$, from 200 repetitions. }
  \centering
  \begin{tabular}{c|cc|cc}
  & \multicolumn{2}{@{}c@{}}{$\w{K}$} & \multicolumn{2}{@{}c@{}}{$d(\taubf, \w{\taubf})$}\\
  \hline
  $\lambda \backslash    \rho$ &  no signal & signal & no signal & signal \\
    no signal    & 1.132 & 5.796 & 0.411 & 0.126 \\
      signal  & 5.411  & 5.998 &  0.112 & 0.056\\
  \end{tabular}
  \label{tab:MPP}
\end{table}

Table \ref{tab:MPP}  gives the average of the selected number of segments $\w{K}$ and the average of the Hausdorff distance $d(\taubf, \widehat{\taubf})$ for the four scenarios described in Section \ref{sec:simudesign}. When both the intensity and the mark parameters are constant, the average of $\w{K}$ is close to $1$, which is the true number, and thus the average of the Hausdorff distance is close to $0$. This shows that the procedure is not prone to overfit by detecting non-existing change-points. 

When both processes are significantly affected by the changes, the correct number of segments ($K=6$) is recovered and the estimated change-points is  close to the true one (and the Hausdorff distance is close to $0$). 
Interestingly, in terms power to detect change-points, there is no clear distinction between the cases where the signal is only in $\lambda$ or only in $\rho$, but the existence of a signal in both parts of the process increases this power.

\subsubsection{Hawkes-type process}
The left panel of Figure \ref{fig:hawkes} shows the estimated change-point in plain red lines and the true one in dotted red lines for a specific PHP with ratio $R=3$.

\begin{figure}[hbtp]
    \centering
    \includegraphics[width=0.35\textwidth]{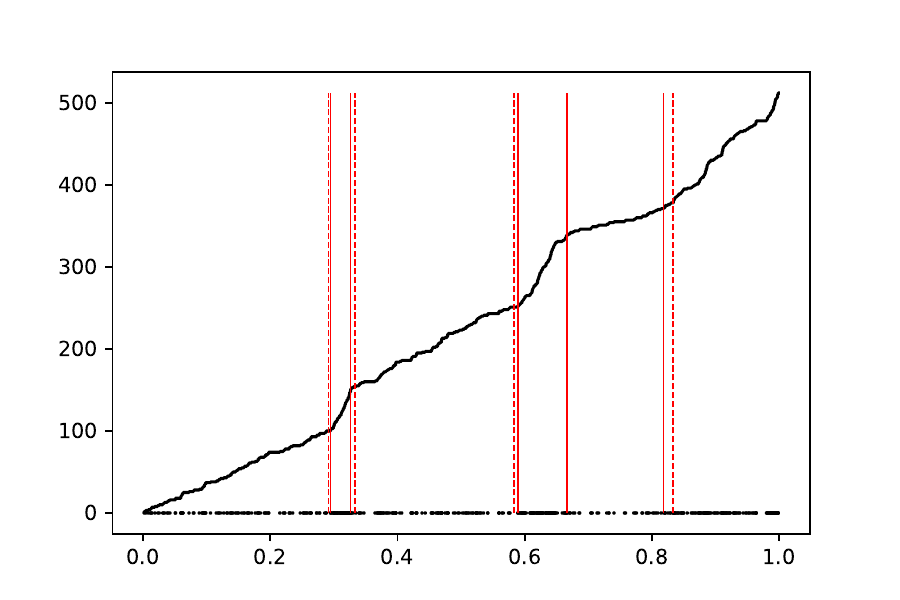}
    \qquad
    \includegraphics[width=.35\textwidth, height=.225\textwidth]{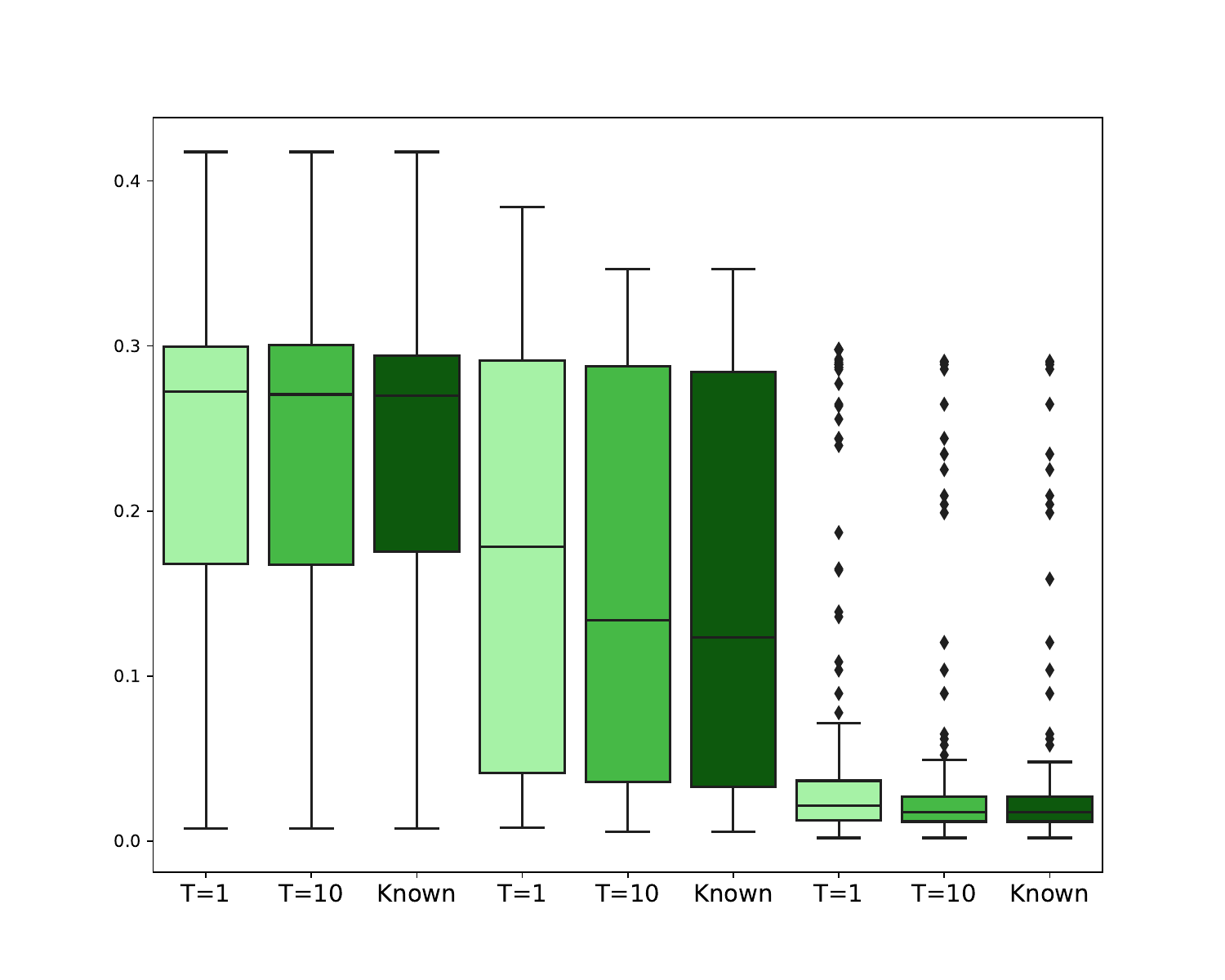}
    \caption{Left: Example with $R=3$. The counting process is the black line, with the events on the x-axis, the true change-points are the plain red lines and the estimated change-points are the dotted red lines. \\
    Right: Boxplot of the Hausdorff distance for 200 repetitions for $R \in \{2,3,6\}$ (three consecutive boxplots for each value of $R$)}
    \label{fig:hawkes}
\end{figure}

The summary of the results for the Hausdorff distance are presented in the right panel of Figure \ref{fig:hawkes}. 
The boxplots labeled "known" refer to the cases where the parameters $(\alpha, \beta)$ are known and given to the algorithm. We represent here three boxplot for each value of $R \in \{2,3,6\}$.
As expected, the lower is $R$, the harder is the task and, thus, the higher is the Hausdorff distance. We observe that the results when the coefficients are known are similar to those obtain for the pink curve on Figure \ref{fig:simPP-HausL2cum} for the Poisson process (which corresponds to the closest scenario in terms of parameters). 
Then, when the parameters have to be estimated on a period $[0, T_{\rm learn}]$, the mean Hausdorff distance for the estimated sequence of change-points is close to the one obtained when the parameters $(\alpha, \beta)$ are known, and even remarkably close for $T_{\rm learn} = 10$. This suggests that the proposed transformation of the process is actually efficient to detect change-points in a piecewise Hawkes-type processes.

\section{Illustrations in vulcanology and seismology} \label{sec:application}

In this section, for illustration, we analyze two datasets describing volcano activity. In Section \ref{sec:appPP}, only the eruption dates are taken into account and we look for homogeneous segments in a Poisson process. In Section \ref{sec:appMPP}, the duration of each eruption is also considered and we look for change-points in a marked Poisson process.

\subsection{Poisson process} \label{sec:appPP}
We consider the eruptions of the Kilauea and Mauna Loa volcanoes in Hawaii, presented by \cite{HoB17}. Both datasets include recorded eruption dates from 1750 to 1983 for Kilauea Volcano and through 1984 for Mauna Loa Volcano. Over these periods, $n=63$ eruptions were observed for the first and $n=40$ for the second. The original data only reports the number of eruptions per year (ranging from 0 to 4). Continuous times were reestablished by associating each eruption with a uniformly distributed date, in the corresponding year. 


We used the cross-validation criterion defined in Section \ref{sec:selection} to determine the number of segments in each series of eruption dates. The criterion indicates the existence of four segments for the Kilauea and two segments (i.e. one change-point) for the Mauna Loa volcano. Note that this last criterion admits a local minimum at $K=4$ (see Figure \ref{fig:appPPselection} in \ref{app:appliPP}).

Figure \ref{fig:appPPsegmentation} gives the optimal segmentation of the Kilauea series in $K=4$ segments (three change-points located at the years 1918, 1934, 1952) and of the Mauna Loa series $K=2$ segments (one change-point located at the year 1843). 
We observe a better fit for the Kilauea series than for the Mauna Loa series. This may result from a conservative behavior of the model selection procedure when the number of events is small ($n = 63$ for the Kilauea vulcano and $n=40$ for the Mauna Loa vulcano). Figure \ref{fig:appPPappendix} in \ref{app:appliPP} shows the segmentations of the Mauna Loa series with $K = 3, 4, 5$ and $6$ segments.


\begin{figure}[hbtp]
  \centering
  \includegraphics[width=.35\textwidth, height=.3\textwidth, trim=0 0 20 30, clip=]{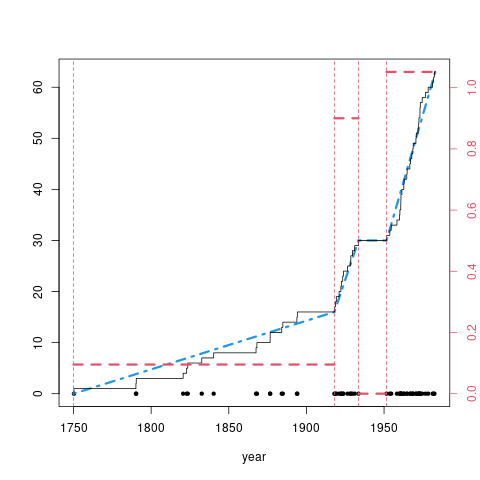}
  \qquad
  \includegraphics[width=.35\textwidth, height=.3\textwidth, trim=0 0 20 30, clip=]{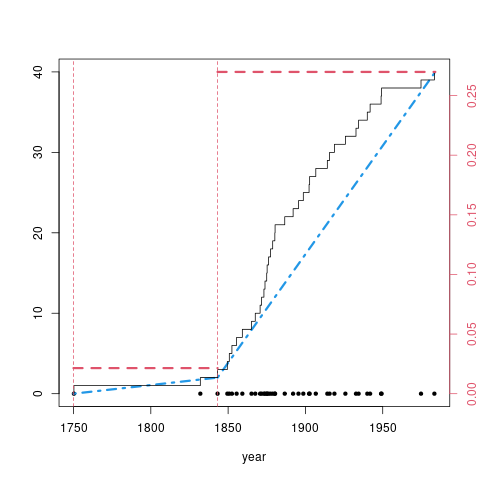}
  \caption{Eruptions of the Kilauea (left) et Mauna Loa (right) volcanos. Final segmentation: black bullets = eruptions times $T_i$, black solid line = observed count process $N(t)$, red vertical dotted lines = estimated change-points $\widehat{\tau}_k$, blue dotted-dashed line = estimated cumulated intensity $\widehat{\Lambda}(t)$, red dashed horizontal lines = estimated piecewise constant intensity $\widehat{\lambda}(t)$ (referred to the right axis). The first vertical red dotted lines is $\tau_0=0$. \label{fig:appPPsegmentation}}
\end{figure}

\subsection{Marked Poisson process}  \label{sec:appMPP}
The data comes from the technical report \citep{etna}. It consists in mount Etna volcano flunk eruption data between 1607 and 2008.  Indeed,  flunk eruption constitutes one of the most important threat for an assessment of volcanic risk according to the authors. The first variable is the date of the eruption of the volcano located in the North Eastern part of the Sicily Island, and the second one is the volume of lava spread. There are $n=63$ events.

We study these data with the inhomogeneous Poisson model then with the marked Poisson model that we presented previously. We search for the best number of segments $K$ in the collection $\{1, \ldots, 10\}$.
The chosen dimension is $\w{K}=2$ with both models (see Figure \ref{fig:etna-contrasts} in \ref{app:appliPP}) and the obtained segmentations are given in Figure \ref{fig:etnaMPP}. We see that the segmentations are different and we believe that we can see the influence of the mark on the second graph. Indeed, on the one hand the Poisson model without marks chooses a change-point at 1968 because after that (as it is explained for example in \cite{report}) the eruptions occur very closely in time until 2001, which justifies a change of regime. On the other hand, the marked process chooses a change-point at 1755, probably mainly because the marks are larger before this time than after (indeed, the report \citep{report} explains that probably the lava measured after 1755 was mostly buried under the products of additional activity). Furthermore, in Figure \ref{fig:etnaMPP-K3} where we represent the segmentation obtained with the marked Poisson model,  imposing $K=3$, we can see that the method gives a segmentation with two change-points which are close to the single change-point found by the Poisson process model and the single change-point found by the marked Poisson process model. 


\begin{figure}[hbtp]
\centering
\includegraphics[width=.35\textwidth, trim=0 0 20 30, clip=]{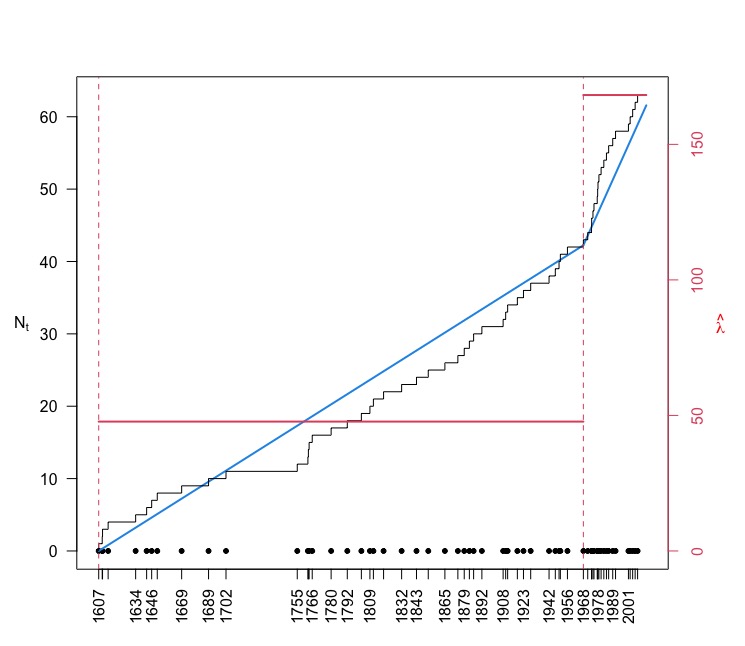}\\
\includegraphics[width=.35\textwidth, trim=0 0 20 30, clip=]{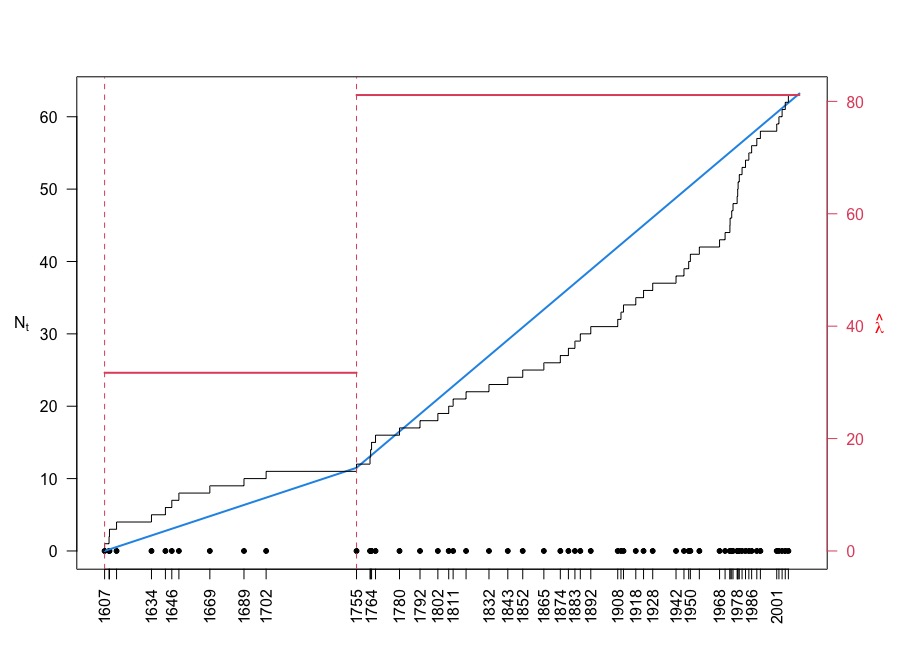}
\qquad
\includegraphics[width=.35\textwidth, trim=0 0 20 30, clip=]{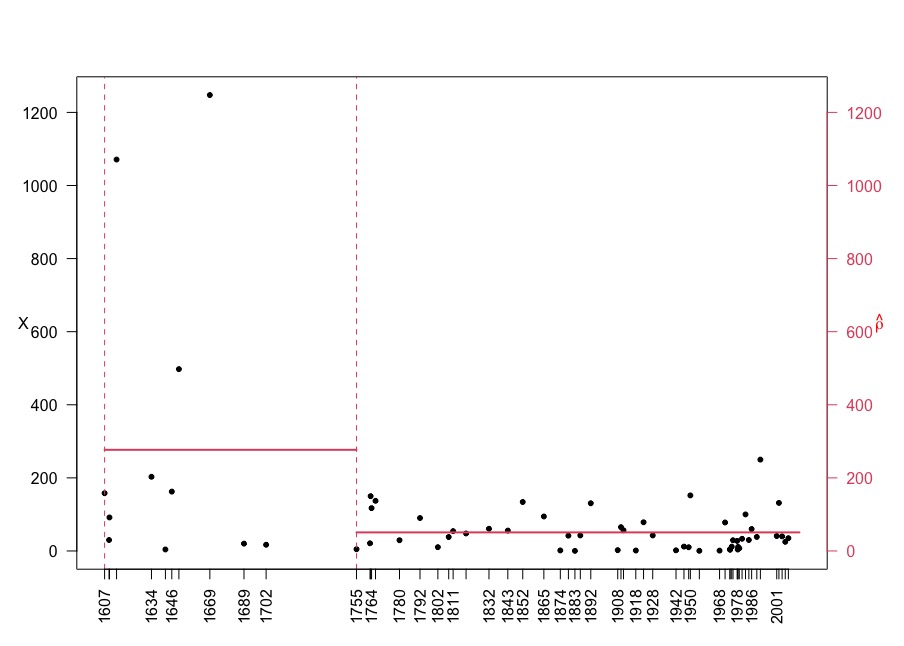}
\caption{Eruptions of Etna.  Top line: Poisson model, bottom line: marked Poisson model.  
(Left graphs) Final segmentations: black bullets = eruptions times $T_i$, black solid line = observed count process $(N_t)$, red vertical dotted lines = estimated change-points $\widehat{\tau}_k$, blue dotted-dashed line = estimated cumulated intensity $\widehat{\Lambda}(t)$, red dashed horizontal lines = estimated piecewise constant intensity $\widehat{\lambda}(t)$ (referred to the right axis). Bottom right graph: black bullets = marks $X_i$,  red vertical dotted lines = estimated change-points,  red dashed horizontal lines = estimated piecewise constant parameter of the exponential distribution of the marks  $\widehat{\rho}$. }
\label{fig:etnaMPP}
\end{figure}
 
\begin{figure}[hbtp]
  \centering
  \includegraphics[width=.35\textwidth, trim=0 0 20 30, clip=]{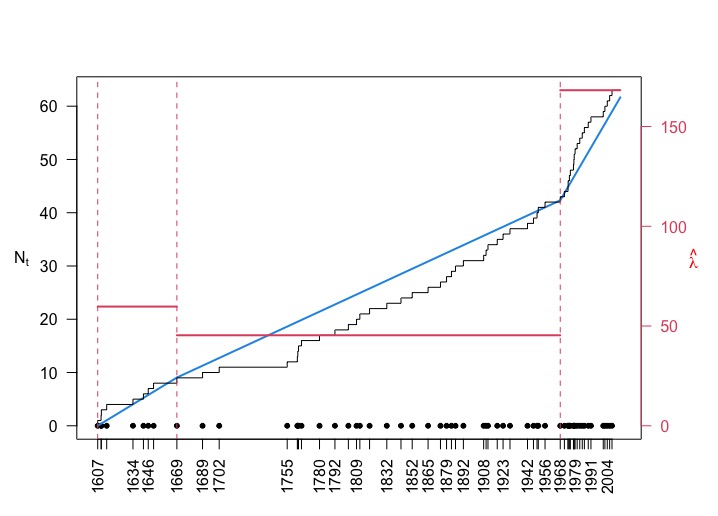}
  \qquad
  \includegraphics[width=.35\textwidth, trim=0 0 20 30, clip=]{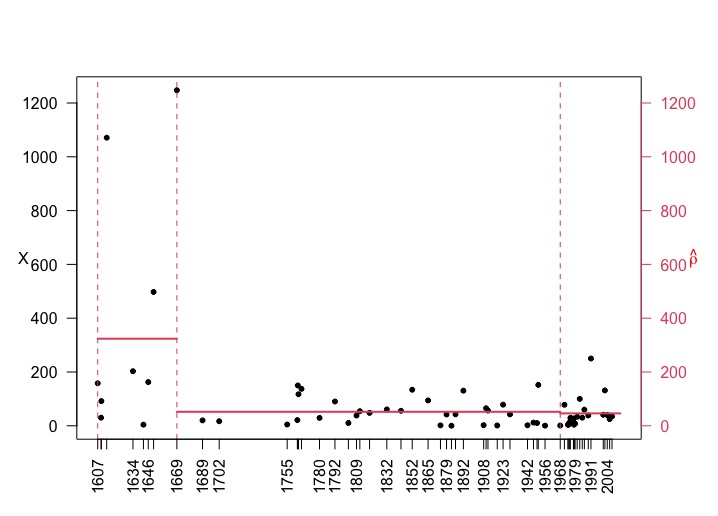}
  \caption{Eruptions of Etna.  Marked Poisson model.  Same legend as top line of Figure \ref{fig:etnaMPP} with ${K}=3$.}
  \label{fig:etnaMPP-K3}
\end{figure}

\subsection{Self-exciting process}  \label{sec:appHP}


We now examine the sequence of earthquakes and aftershocks in Thailand, described in \cite{ptprocess}, and available in the R-package \texttt{PtProcess}. The complete dataset includes the magnitudes of the shocks, which we use here solely to characterize the main events to be detected. The three significant events are the earthquakes that occurred in Phuket in December, 25, 2004,  and in Sumatra in March, 28, 2005, and September, 12, 2007, corresponding to the largest magnitudes (around 8.5) as shown in the right panel of Figure \ref{fig:phuket}. 
We apply the PHP model to detect these main events using only the recorded occurrences, excluding the magnitudes. Since the parameters of the underlying process, $\theta$, are unknown, we first fix the parameter $\beta$ and then run the algorithm over a grid of $\alpha$ values, selecting the one that maximizes the log-likelihood. 

Initially, we analyzed the observed process within a Poisson framework (as outlined in Sections \ref{sec:ModelPP}, \ref{sec:GeneralCP} and \ref{sec:ContrastPP}), but were unable to identify the three main events, even with a larger number of change-points. 
We then shifted to the PHP model, estimating the parameters $\alpha$ and $\beta$ using the second method described at the end of Section \ref{sec:HP}, with $K=5$ segments. We observed a minimal influence of the $\beta$ value on the change-point detection on this example, so we fixed it at $8$.
The results shown in Figure \ref{fig:phuket} indicate that we accurately detected the three main events (with 15 days of delay for the second one. We detect an additional event on September, 29, 2007, which seems to ends a very high seismic activity.
\begin{figure}[hbtp]
  \centering
  \includegraphics[width=.35\textwidth, trim=0 0 20 30, clip=]{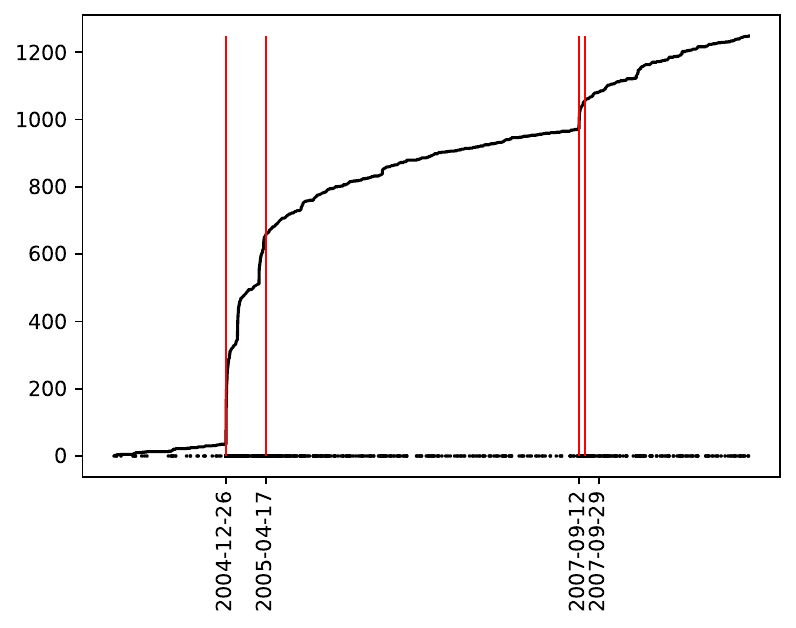}
  \qquad
  \includegraphics[width=.35\textwidth, trim=0 0 20 30, clip=]{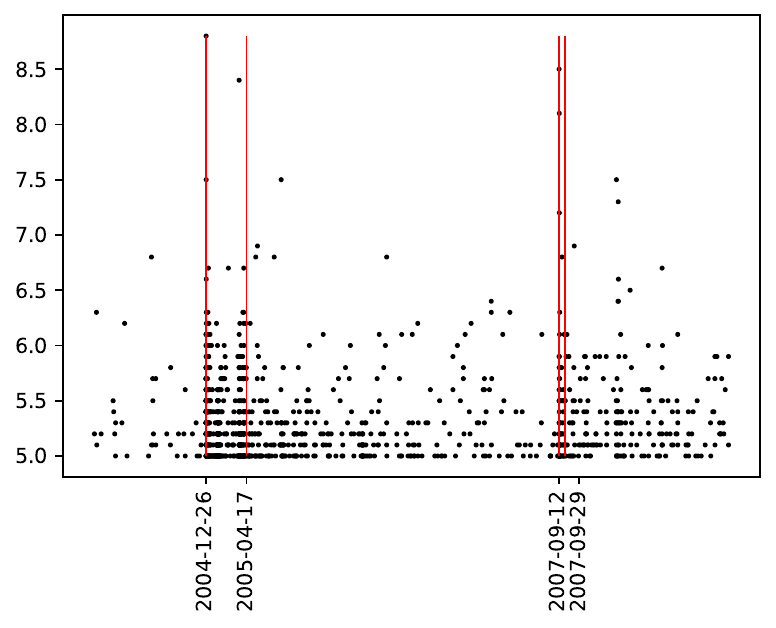}
  \caption{Aftershock sequence of the Phuket earthquake. Counting process (black line) and events (dots), together with estimated change-points in red lines and the dates on the $x$-axis. Right graph: magnitudes (not used of the estimation).}
  \label{fig:phuket}
\end{figure}



\section{Discussion}\label{sec:discussion}

We present a comprehensive frequentist framework for detecting multiple change-points in the intensity of a Poisson process. As is common in change-point detection, the inference procedure consists of two steps: (1) segmenting the observed path $N$ into a predetermined number of segments, and (2) selecting this number. For step (1), we demonstrate that any contrast function satisfying both segment-additivity and concavity allows for the optimal segmentation into $K$ segments to be achieved exactly and efficiently using a dynamic programming algorithm.
Note that the use of dynamic programming fundamentally relies on the segment-additivity property and that traditional methods like maximum likelihood or least-square inference fit within this framework. For step (2), we propose employing a cross-validation strategy

Most model selection criteria for determining the number of segments $K$ involve penalized contrasts, where the penalty term is dependent on $K$ \citep[as in the classical BIC and AIC, or in][]{Leb05,lavielle2005using,jackson2005,killick2012} or both on the number of segments and their lengths \cite{ZhS07}.
As noted by \cite{jackson2005}, if the penalty term is proportional to $K$, it can be spread in the cost of each segment, allowing for the integration of both tasks into a single dynamic programming step, thus achieving segmentation and model selection simultaneously.
Our framework can also include more general penalized contrasts, provided that the penalty also satisfies the same admissibility condition.

The computational cost of dynamic programming is quadratic in the number of events $n$ and linear in the number of segments $K$, that is $\cO(n^2 K)$, which can be too demanding when dealing with a large number of events.
Time efficiency can be enhanced by utilizing the pruned version of dynamic programming proposed by \cite{killick2012}, which yields a complexity almost linear in $n$.

Finally, in many scenarios, multiple simultaneous trajectories are observed. In such cases,  In such cases, detecting changes that affect all series concurrently becomes the goal. If the associated contrast remains concave with respect to segment lengths, this problem can be addressed with a complexity comparable to that of the single trajectory case.

\section{Proofs} \label{app:proofs}

\subsection{Proofs of Proposition \ref{prop:Concavity_C}} \label{sec:proof_concavity}
Let fix $\nubf \in \Upsilon^{K,n}_\star$ and  take $\taubf \in \cM_\nu^K (N)$. 
Denoting by $C'$ and $C''$ the first and second derivative of $C$ with respect to its second argument, we have, for all $1 \leq k < K$,
\begin{align*}
\frac{\partial \gamma(\taubf) }{\partial \tau_k} & = C^\prime (\nu_k,\Delta \tau_k) - C^\prime (\nu_{k+1},\Delta \tau_{k+1}),  & 
\frac{\partial^2 \gamma(\taubf) }{\partial \tau_k \partial\tau_{k+1}} & = -  C^{\prime \prime}  (\nu_{k+1},\Delta \tau_{k+1}) \quad (\text{if $k < K-1$}),\\
\frac{\partial^2 \gamma(\taubf) }{\partial \tau_k^2} & = C^{\prime \prime}  (\nu_k,\Delta \tau_k) + C^{\prime \prime}  (\nu_{k+1},\Delta \tau_{k+1}).
\end{align*}
The Hessian matrix of the contrast $\gamma$ is the following $(K-1)\times (K-1)$ matrix
\begin{equation*}\label{eq:hess}
H_K = \left(
\begin{matrix}
-(A_1+A_2) & A_2 & 0 &\ldots \\
A_2 & -(A_2+A_3) & A_3 & \ldots \\
0 & \ddots   &  \ddots&  \ddots\\
 \vdots &0 & A_{K-1} & -(A_{K-1}+A_{K})
\end{matrix}\right),
\end{equation*}
where $A_k: = -C^{\prime \prime}  (\nu_k,\Delta \tau_k)$.
For all vector $u \in \R^{K-1}\setminus\{0\}$, we have that
\begin{align*}
  u^t H_K u 
  & =  - \sum_{k = 1}^{K-1} u_k^2(A_k+A_{k+1})+2 \sum_{k = 1}^{K-2} u_kA_{k+1}u_{k+1},\\
  & =  - \sum_{k = 1}^{K-2} (u_k-u_{k+1})^2 A_{k+1}-u_1^2 A_1-u_{K-1}^2 A_K.
\end{align*}
Using the concavity assumption \ref{hyp:concavity} of the contrast function on each segment we have that $A_k \geq 0$ for all $k$, thus $u^t H_K u \leq 0$, which concludes the proof \change{}{ in the concave case}. \change{}{Under the strict version of assumption \ref{hyp:concavity}, all $A_k$ are positive, so $u^t H_K u < 0$, which concludes the proof for the strictly concave case.} 

\subsection{Proof of Proposition \ref{prop:link_compensators}}
If $t\in [\tau_k, \tau_{k+1}[$, 
$$\Lambda(t)= \int_0^t \lambda(u)du = \int_0^t \sum_{\ell =1}^{k+1} c_\ell \one_{[\tau_{\ell-1}, \tau_\ell[}(u) \lambda_0(u) du$$
then 
\begin{align*}
  \Lambda(t)
  = & \sum_{\ell =1}^{k} c_\ell \int_{\tau_{\ell-1}}^{\tau_\ell}  \lambda_0(u) du
+ c_{k+1} \int_{\tau_k}^t \lambda_0(u) du\\
  & = \sum_{\ell =1}^{k} c_\ell (\Lambda_0(\tau_\ell)- \Lambda_0(\tau_{\ell-1}) + c_{k+1}(\Lambda_0(t)- \Lambda_0(\tau_k)).
\end{align*}

\subsection{Proof of Theorem \ref{th:Modified_time_rescaling}}
Denoting by ${T_1, \dots, T_n}$ the event times of the Hawke-type process $N$ defined in Section \ref{sec:HP}, and by $(H_t)_t$ the history of the process, the likelihood of $N$ is
\begin{align*}
f_N(T_1, \dots T_n) 
:= p_H(N;\taubf, \cbf, \alpha,\beta ) 
& = \left(\prod_{i=1}^n \lambda(T_i)\right) e^{-\int_0^1 \lambda(s) \d s}.
\end{align*}
We now split the interval $[0, \tau_K)$ into the intervals $I_k = [\tau_{k-1}, \tau_k)$ for $k = 1, \dots K$. Denoting by $n_k$ the number of event times that occur in interval $I_k$, because $\lambda(t) = c_k \lambda_0(t)$ within segment $I_k$, we get
\begin{align} \label{eq:joint_density_events1}
f_N(T_1, \dots T_n) 
    & =  \prod_{k=1}^K
   \left(\prod_{T_i \in I_k} \lambda(T_i)\right) e^{-\int_{\tau_{k-1}}^{\tau_{k}} \lambda(s) \d s} \nonumber \\
    & = 
    \left(\prod_{i=1}^n \lambda_0(T_i) \right)
    \left(\prod_{k=1}^K c_k^{n_k} e^{-c_k (\Lambda_0(\tau_k) - \Lambda_0(\tau_{k-1}))}\right), 
\end{align}
which gives Equation \eqref{eq:tranformed_poisson_loglik}.\\

For event times $(t_i)_i$, we define the transformed inter-event times $u_i :=\Lambda_0(t_i) - \Lambda_0(t_{i-1})$, for $i \in \{1, \dots, n\}$ (with $t_0 = 0$), $\psi$ as the function defined over the cube $[0, \tau_K)^n$ by
$$
\psi : (t_1, \dots , t_n) \mapsto (u_1, \dots, u_n)
$$
and $J_\psi$ as its Jacobian matrix.
Using a multivariate change of variable theorem, the joint density function $f_{\mathbf{u}}$ of $\{u_1, \dots, u_n\}$ is 
\begin{align} \label{eq:main1_joint_density}
    f_{\mathbf{u}}(u_1, \dots, u_n) = 
    f_N\left(\psi^{-1} (u_1, \dots, u_n)\right)
    / 
    \left|J_{\psi} \left(\psi^{-1} (u_1, \dots, u_n)\right)\right|,
\end{align}
This change of variable is valid because the function $\psi$ is one-to-one, and the Jacobian determinant $\left|J_{\psi} (t_{1}, \dots , t_{n_k})\right|$ does not vanish, allowing us to obtain \eqref{eq:main1_joint_density}.
The first assertion is straightforward as $\Lambda_0$ is a bijective function.
The second assertion can be confirmed by observing that $J_{\psi} (t_1, \dots , t_n)$ is a lower triangular matrix, leading to its determinant being the product of its diagonal elements, that is
\begin{equation} \label{eq:determinant_jacobian_mcv1}
    \left|J_{\psi} (t_1, \dots , t_n)\right| 
    = \prod_{k=1}^K \prod_{t_i \in I_k} \frac{\lambda(t_i)}{c_k}
    = \prod_{i=1}^{n} \lambda_0(t_i), 
\end{equation}
which doesn't vanishes since $\lambda(t)>0$ and $c_k>0$ by assumption.
Inserting \eqref{eq:joint_density_events1} and \eqref{eq:determinant_jacobian_mcv1} in expression \eqref{eq:main1_joint_density}, we get
\begin{align*}
    f_{\mathbf{u}}(u_1, \dots, u_n)  
    = 
    \prod_{k=1}^K c_k^{n_k}
    e^{-c_k(\Lambda_0(\tau_k) - \Lambda_0(\tau_{k-1}))} 
\end{align*}
which is the joint distribution of the inter-event times of a heterogeneous Poisson process with piecewise constant intensity \eqref{eq:cond_int_php}.

\section*{Acknowledgments}
This work has been conducted within the FP2M federation (CNRS FR 2036). This work is also part of the 2022 DAE 103 EMERGENCE(S) - PROCECO project supported by Ville de Paris. We are grateful to the INRAE MIGALE bioinformatics facility (MIGALE, INRAE, 2020. Migale bioinformatics Facility, doi: 10.15454/1.5572390655343293E12) for providing computing resources.



\newpage
\appendix

\section{Contrasts} \label{app:contrasts}
In this section, we give some details about the calculations of the contrasts for both the Poisson and the marked Poisson models given in Sections \ref{sec:ContrastPP} and \ref{sec:MPP} respectively.

\paragraph{Least-squares Poisson contrast}
The least-squares criterion of a given path $N$ and for a fixed $\nubf \in \Upsilon^{K,n}_\star$ and any $\taubf \in \mathcal{P}_\nubf^K(N)$ is
\begin{equation} \label{eq:LS}
  LS_P  (N ; \taubf, \lambdabf)=\sum_{k=1}^K \left ( \Delta \tau_k  \ \lambda_k^2- \Delta N_k \lambda_k \right),
\end{equation}
that is minimal for the same intensities $\widehat{\lambda}_k(\taubf) = \frac{\Delta N_k}{\Delta \tau_k}$, $1 \leq k \leq K$. The associated contrast function of $\taubf$ is therefore as follows
\begin{equation} \label{contrast:LS}
 \tilde{\gamma}_P (\taubf)
 = LS_P  (N ; \taubf, \w{\lambdabf}) =- \sum_{k=1}^K  \frac{\nu_k}{\Delta \tau_k}= \sum_{k=1}^K  C(\nu_k,\Delta \tau_k).
\end{equation}
Similar to the likelihood contrast, it is straighforward to see that both assumptions \ref{hyp:additivity} and \ref{hyp:concavity} are satisfied. Moreover, it shares the same undesirable property: for $K>2$, the optimal segmentation inevitably contains segments of null length.  

\paragraph{Poisson-Gamma likelihood-based contrast}
Recall that the intensities $\lambda_k$ are assumed to be independent random variables and follow a gamma distribution with parameters $a>0$ and $b>0$, denoted $\Gam(a,b)$. The distribution of $\lambdabf$ is thus
\begin{equation*}
  p_{G}(\lambdabf ; a, b)=\prod_{k=1}^K   \frac{b^a}{\Gamma(a)} \ \lambda_k^{(a-1)} \ e^{-b \lambda_k},
\end{equation*}
where $\Gamma$ represents for the gamma function: $\Gamma(a) = \int_0^{+\infty} e^{-t} t^{a-1} \d t$. 
For a given $\taubf$, the joint distribution of $(N, \lambdabf)$ is therefore
\begin{equation*}
  p_{PG}(N, \lambdabf ; \taubf, a, b)  =  p_{P}(N ;  \lambdabf, \tau ) \  p_{G}(\lambdabf ;  a, b) = \prod_{k=1}^K \frac{b^a}{\Gamma(a)}   \lambda_k^{(\Delta N_k+a-1)} \ e^{-(\Delta \tau_k+b) \lambda_k},
  \end{equation*}
and the marginal distribution of $N$ is
\begin{equation*}
  p_{PG}(N ; \taubf, a, b)= \int  p_{PG}(N, \lambdabf ; \taubf, a, b)   \d \lambdabf= \prod_{k=1}^K \dfrac{b^a \; \Gamma(\Delta N_k+a)}{\Gamma(a) \; (\Delta \tau_k+b)^{\Delta N_k+a}}.
\end{equation*}
For a fixed $\nubf \in \Upsilon^{K,n}_\star$ and all $\taubf \in \mathcal{P}_\nubf^K(N)$, the proposed contrast, the so-called Poisson-Gamma contrast, is then
\begin{equation*} 
  \gamma_{PG} (\taubf) = - \log{p_{PG}(N \mid \taubf; a, b)}=\sum_{k=1}^K \left (-a \log{b} +\log{\Gamma(a)}  +\widetilde{a}_k \log{\widetilde{b}_k} - \log{\Gamma(\widetilde{a}_k)} \right),
 \end{equation*}
where $\widetilde{a}_k =  \nu_k+a$, $\widetilde{b}_k =  \Delta \tau_k+b$. 

\paragraph{Marked Poisson-Gamma-Exponential-Gamma likelihood-based contrast} Recall that the $\lambda_k$'s and $\rho_k$'s are assumed to be independent random variables and follow a Gamma distribution with parameters $a_\lambda>0$ and $b_\lambda>0$ and a Gamma distribution with parameters $a_\rho>0$ and $b_\rho>0$, respectively.  For a given $\taubf$, the joint distribution of $(N, X, \lambdabf,\rhobf)$ is thus
\begin{align*}
  p_{MPGEG}(N, X, \lambdabf,\rhobf ; \taubf, a_\lambda, b_\lambda, a_\rho, b_\rho)  
  & =   p_{P}(N, X ; \taubf, \lambdabf,\rhobf ) \  p_{G}(\lambdabf ;  a_\lambda, b_\lambda) \  p_{G}(\rhobf ;  a_\rho, b_\rho),  \\
  & =  \prod_{k=1}^K \frac{{b_\lambda}^{a_\lambda}}{\Gamma(a_\lambda)}   \lambda_k^{(\Delta N_k+a_\lambda-1)} e^{-(\Delta \tau_k+b_\lambda) \lambda_k} \\
  & \quad \times \prod_{k=1}^K \frac{{b_\rho}^{a_\rho}}{\Gamma(a_\rho)}   \rho_k^{(\Delta N_k+a_\rho-1)} \ e^{-(S_k+b_\rho) \rho_k},
\end{align*}
and the marginal distribution of $(N, X)$ is
\begin{align*}
  p_{MPGEG}(N, X ; \taubf ,  a_\lambda, b_\lambda, a_\rho, b_\rho)
  & = \iint  p_{PGEG}(N, \lambdabf,\rhobf ; \taubf , a_\lambda, b_\lambda, a_\rho, b_\rho) \  \d \lambdabf \  \d \rhobf, \\
  & = \prod_{k=1}^K \dfrac{{b_\lambda}^{a_\lambda} \; \Gamma(\Delta N_k+a_\lambda)}{\Gamma(a_\lambda) \; (\Delta \tau_k+b_\lambda)^{\Delta N_k+a_\lambda}} \ \dfrac{{b_\rho}^{a_\rho} \; \Gamma(\Delta N_k+a_\rho)}{\Gamma(a_\rho) \; (S_k+b_\rho)^{\Delta N_k+a_\rho}}.
\end{align*}
For a fixed $\nubf \in \Upsilon^{K,n}_\star$ and for any $\taubf \in \mathcal{P}_\nubf^K(N)$, the MPGEG contrast is then
\begin{align*} 
  \gamma_{MPGEG} (\taubf)
  & = -  \log{p_{MPGEG}(N ; \taubf , a_\lambda, b_\lambda, a_\rho, b_\rho)},  \\
  & =  \sum_{k=1}^K \left ( \widetilde{a}_{k, \lambda} \log{\widetilde{b}_{k, \lambda}} - \log{\Gamma(\widetilde{a}_{k, \lambda})} \right) +\left (  \widetilde{a}_{k, \rho} \log{\widetilde{b}_{k, \rho}} - \log{\Gamma(\widetilde{a}_{k, \rho})} \right) \nonumber\\
  & \quad + K \left ( -a_\lambda \log{b_\lambda } +\log{\Gamma(a_\lambda )}  -a_\rho \log{b_\rho } +\log{\Gamma(a_\rho )} \right ), \nonumber
\end{align*}
where $\widetilde{a}_{k, \lambda} =  \nu_k+a_\lambda$, $\widetilde{b}_{k, \lambda} =  \Delta \tau_k+b_\lambda$, $\widetilde{a}_{k, \rho} =  \nu_k+a_\rho$, and $\widetilde{b}_{k, \rho} = S_k+b_\rho$. 

\section{Dynamic Programming (DP) algorithm} \label{app:DP}
In this section, we describe the principle of the standard Dynamic Programming algorithm for a discrete segmentation problem and specify how to apply it on a finite and given grid as in our case. 

\paragraph{DP for discrete change-points problem} 
If we observed an ordered sequence of $A$ observations, the goal is to find a partition of the discrete grid $\llbracket 1,A \rrbracket$ into $K$ segments delimited by $K-1$ change-points denoted $a_k$ for $k=1,\ldots, K-1$ with the convention  $a_0=0$ and $a_K=A$. The $k$-th segment is $\llbracket a_{k-1}+1,a_k \rrbracket$ and we define the set of all possible segmentations with $K-1$ change-points:
$$
\mathcal{A}^K_A=\{\abf=(a_1,a_2,\ldots,a_{K-1}) \in {\mathbb{N}^{K-1}: a_0=0<a_1<\ldots <a_{K-1}<a_K=A}\},
$$
with cardinality $\binom{A-1}{K-1}$. Even though this space is finite, it is extremely large and a naive search is computationally prohibitive. The well-known solution consists of using the DP algorithm which can be applied if and only if the quantity to be optimized is segment-additive:
$$
\w{\abf}= (\w a_1,\w a_2,\ldots,\w a_{K-1})=\argmin{\abf \in \mathcal{A}^K_A} R_\abf= \argmin{\abf \in \mathcal{A}^K_A} \sum_{k=1}^K C(a_{k-1}+1:a_k),
$$
where $R_\abf$ is called the cost of the segmentation $\abf$ and $C(a_{k-1}+1:a_k)=C(\llbracket a_{k-1}+1, a_k\rrbracket)$ the cost of the segment $\llbracket a_{k-1}+1, a_k\rrbracket$. We define
$$
C_{K,A}=\minim{\abf \in \mathcal{A}^K_A}  \sum_{k=1}^K C(a_{k-1}+1:a_k),
$$
the cost of the best segmentation in $K$ segments. If we note $C(i:j)=C(\llbracket i, j\rrbracket)$ the cost of the segment $\llbracket i, j\rrbracket$, thanks to the segment-additivity property of $R_\abf$, DP solves the optimization problem using the following update rule: 
\begin{equation*}
  C_{K,A} = \minim{0<a_1<...<a_{K-1}<A} \sum_{k=1}^K C(a_{k-1}+1:a_k)= \minim{K-1 \leq h<A}  \left \{C_{K-1,h} +C(h+1:A) \right \}. 
\end{equation*}

This algorithm requires calculating the cost of each possible segment which is
\begin{align*}  
  C(i:j) 
  & = C(\llbracket i, j \rrbracket) \ \ \ \text{if $1 \leq i  \leq j \leq A$} \\
  & = + \infty \ \ \ \text{otherwise.} 
\end{align*}

\paragraph{Search in a fixed and finite grid} Let  
$$
\tbfp=\{\text{tp}_1, \text{tp}_2 ,   \ldots ,\text{tp}_A\},
$$
be this grid where $0<\text{tp}_i<1$ and $A$ is the size of the grid or the number of potential change-points. If a segment is defined as $(\text{tp}_\ell,\text{tp}_h]$ with $\ell <h$, we can apply DP with  
\begin{align*}  
  C(i:j) 
  & = C((\text{tp}_{i-1},\text{tp}_j]) \ \ \ \text{if $1 \leq i  \leq j \leq A+1$} \\
  & = + \infty \ \ \ \text{otherwise.} 
\end{align*}
with the convention $\text{tp}_0=0$ and $\text{tp}_{A+1}=1$. The complexity is thus $\mathcal{O}((A+1)^2 K)$ and the optimal change-points are given by 
$$
\w{\tau}_k=\text{tp}_{\w{a}_k}.
$$

\section{Cross-validation procedure} \label{app:cv}

\begin{algorithm}[H]
  \caption{Cross-validation procedure}
  \label{algo:cv}
  \begin{algorithmic}[1]
    \State {\bf input:} a realization of the process $N$.
    \State {\bf cross-validation:} for $m = 1$ to $M$
    \begin{itemize}
      \State {\bf sample} a learning process $N^{m, L}$ from $N$ with probability $f$, and form the test process $N^{m, T}$ with the remaining events;
      \State {\bf for $K = 1$ to $K_{\max}$, } 
      \begin{itemize}
        \State {\bf segment} the learning process $N^{m, L}$ using the Poisson-Gamma contrast \eqref{contrast:PG} to get
        $$
        \widehat{\taubf}^{m, K, L}
        =
        \argmin{\nubf \in  \Upsilon^{K,n(m,L)}_\star}   \min_{\taubf \in \cM_\nubf^K(N^{m,L})}  \gamma_{PG} (\taubf),
        $$
        where $n(m,L)$ is the number of events in the learning process,
        \State {\bf deduce} the estimate $\widehat{\lambdabf}^{m, K, L}$ as the set of posterior means \eqref{PosteriorMeanLambda}
        $$
        \widehat{\lambda}_k^{m, K, L} = \E(\lambda_k | N^{m, L}, \widehat{\taubf}^{m, K, L})= \frac{a^{m,L}+\Delta N_k^{m,L}}{b^{m,L}+\Delta \hat{\tau}_k^{m,K,L}} \ \ \ \ \text{for all $k = 1, \dots K$},
        $$ 
        \State {\bf compute} the Poisson contrast for the test process
        $$
        \gamma^{m, K, T} = - \log p_P\left(N^{m, K, T}; \widehat{\taubf}^{m, K, L}, \frac{1-f}f \widehat{\lambdabf}^{m, K, L}\right).
        $$
      \end{itemize}
    \end{itemize}
    \State {\bf averaging:} for $K = 1$ to $K_{\max}$, compute the average contrast
      $$
      \overline{\gamma}^{K, T} = \frac1M \sum_{m=1}^M \gamma^{m, K, T}.
      $$
    \State {\bf selection:} select $K$ as
      $$
      \widehat{K} = \argmin{K} \; \overline{\gamma}^{K, T}.
      $$
    \end{algorithmic}
\end{algorithm}

\section{Simulation parameters} \label{app:simulation}

\begin{figure}[hbtp]
  \centering
  \includegraphics[width=.8\textwidth, trim=0 0 0 10, clip=]{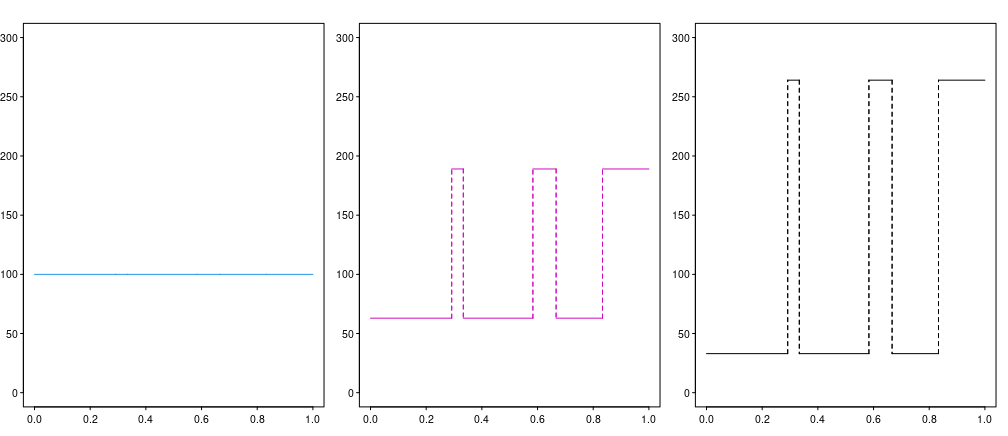}
  \caption{Piecewise intensity function $\lambda(t)$ on $[0,1]$ for $\bar{\lambda}=100$ and $\lambda_R\in \{1,3,8\}$ from left to right.}
  \label{fig:lambda}
\end{figure}

We provide here measures of the difficulty of the segmentation problems considered in the simulation study presented in Section \ref{sec:simu}. To this aim, we define a signal-to-noise ratio (SNR) measure as

\begin{equation} \label{eq:SNR}
\text{SNR}=\sqrt{\frac{\Delta\tau_- \lambda_-^2 + \Delta\tau_+ \lambda_+^2-\lambdabar^2}{\lambdabar}}
\end{equation} 
This definition of the SNR is consistent with this used in the Gaussian setting to measure the difficulty of a segmentation problem \citep[see e.g.][]{PLH11}. The SNR is zero when the intensity $\lambda$ is constant. Table \ref{table:lp_lm} gives the values of $\lambda_-$ and $\lambda_+$ for each combination $(\lambdabar, \lambda_R)$ considered in the simulation design described in Section \ref{sec:simudesign}. Table \ref{table:SNR} gives the corresponding SNR values. 

\begin{table}[hbtp]
  \caption{Values of $\lambda_-$, $\lambda_+$ for each combination $\lambdabar / \lambda_R$ used in the simulation study. The case $\lambda_R = 1$ ($\lambda_- = \lambda_+ = \lambdabar$) is not displayed.}\label{table:lp_lm}
  \begin{center}
    \begin{tabular}{c|ccccccc}
      \multicolumn{1}{c}{} & \multicolumn{7}{c}{$\lambda_R$} \\
      $\lambdabar$ &  2 & 3 & 4 &6 & 8 & 11 & 16 \\
      \hline
      32 & 25--50 & 20--60 & 17--68 & 13--78 & 11--88 & ~8--88 & ~6--96 \\
      56 & 43--86 & ~35--105 & ~30--120 & ~23--138 & ~18--144 & ~14--154 & ~10--160 \\
      100 & ~77--154 & ~63--189 & ~53--212 & ~41--246 & ~33--264 & ~26--286 & ~19--304 \\
      178 &  138--276 & 112--336 & ~95--380 & ~72--432 & ~59--472 & ~45--495 & ~33--528 \\
      316 &  245--490 & 200--600 & 169--676 & 129--774 & 104--832 & ~81--891 & ~59--944 \\
      562 &   435--870 & ~355--1065 & ~300--1200 & ~229--1374 & ~185--1480 & ~143--1573 & ~105--1680 \\
      1000 & ~774--1548 & ~632--1896 & ~533--2132 & ~407--2442 & ~329--2632 & ~255--2805 & ~186--2976 \\
    \end{tabular}
  \end{center}
\end{table}

\begin{table}[hbtp]
  \caption{Values of $SNR$ for the different values of $\lambdabar$ and $\lambda_R$ used in the simulation study.}\label{table:SNR}
  \begin{center}
    \begin{tabular}{c|ccccccc}
      \multicolumn{1}{c}{} & \multicolumn{7}{c}{$\lambda_R$} \\
      $\lambdabar$ &  2 & 3 & 4 &6 & 8 & 11 & 16 \\
      \hline
      32 & 2.15 & 3.11 & 4.07 & 5.21 & 6.42 & 6.32 & 7.27 \\ 
      56 & 2.43 & 4.11 & 5.51 & 7.06 & 7.49 & 8.37 & 8.87 \\ 
      100 & 3.34 & 5.68 & 7.14 & 9.40 & 10.54 & 11.97 & 13.12 \\ 
      178 & 4.75 & 7.54 & 9.72 & 12.18 & 14.17 & 15.22 & 16.83 \\ 
      316 & 6.34 & 10.29 & 13.03 & 16.56 & 18.63 & 20.77 & 22.68 \\ 
      562 & 8.33 & 13.62 & 17.28 & 22.00 & 24.86 & 27.35 & 30.28 \\ 
      1000 & 11.10 & 18.20 & 22.96 & 29.27 & 33.12 & 36.62 & 40.10 \\ 
    \end{tabular}
  \end{center}
\end{table}

To make the SNR measure even more interpretable, we relate it to the power of the Kolmogorov-Smirnoff (KS) for the uniform distribution of the event times over $(0, 1)$. In absence of signal (constant intensity $\lambda$), the events are uniformly distributed over $(0, 1)$. When the segmentation structure gets stronger (higher SNR), the KS is more and more capable of detecting a departure from the uniform distribution. \\
Figure \ref{fig:SNR1_PowerUnif} shows how the power of the KS increases as the SNR increases. We observe that for SNR smaller than 10 (see Table \ref{table:SNR}), the power of the KS test is low, meaning simply detecting heterogeneity is a difficult task. Note that detecting heterogeneity is a much simpler task than precisely locating change-points and estimating the intensity between each of them. We see that the configurations considered in the proposed simulation designed range from very difficult (very low power of the KS test) to fairly easy.

\begin{figure}[hbtp]
  \centering
  \includegraphics[width=.5\textwidth, trim=0 0 20 30, clip=]{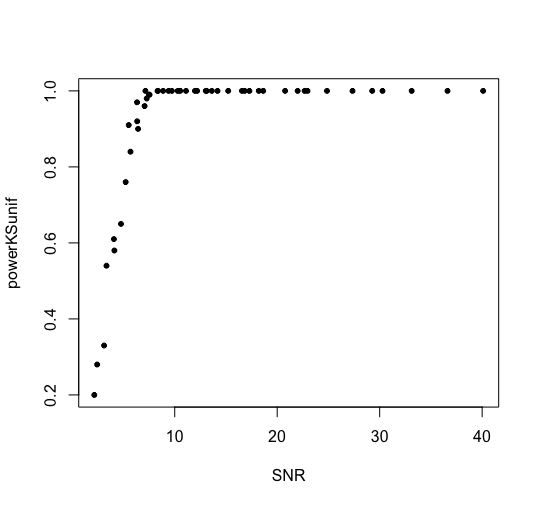}
  \caption{Power of the Kolmogorov-Smirnoff (KS) for the uniform distribution of the event times, as a function of the SNR. Each dot corresponds to a given combination $(\lambdabar, \lambda_R)$ used in the simulation.}
  \label{fig:SNR1_PowerUnif}
\end{figure}

\section{Simulation of PHP}\label{app:simuHP}
We present here an algorithm for sampling from the presented PHP.
The algorithm we employ is the modified thinning algorithm proposed by \cite{Ogata1981} for sampling from an HP process \citep[see also][]{DaleyAl2006,Rasmussen2018,LaubAL2015}.

Consider a PHP with conditional intensity function $\lambda(t)$. Conceptually, the thinning algorithm we use to sample from $N$ in $[0, T]$ resembles an accept/reject mechanism: initially generating more points than required in $[0, T]$ using a larger intensity than $\lambda$. In other words, points are sampled according to a Poisson process of intensity function $M(t)$ satisfying $M(t) > \lambda(t)$ for all $t \in [0,T]$. 
Subsequently, some of these points are discarded with a certain probability, ensuring that the retained points follow the desired distribution. 
Precisely, each point is deleted with a probability $1 - \lambda/M$ 
The retained points form a sample from a point process of conditional intensity $\lambda(t)$; see \citep[Proposition 1]{Ogata1981}.

\begin{algorithm}[H]
    \caption{Sampling from a point process of conditional intensity $\lambda$ and history $H_t$}
    \label{algo:sampling_by_thinning}
    \begin{algorithmic}[1]
      \State{\textbf{input} conditional intensity $\lambda$}
      \State \textbf{set} t=0, i=0 
      \While{$t \leq T$}
        \State {\textbf{compute} $M(t)$ via Equation \eqref{eq:bound_c_in_for_simulation}}
        \State {\textbf{generate} independent r.v. $s \sim \text{Exp}(1/M(t))$ and $u \sim \text{Unif}(0,1)$}
        \State {\textbf{set} $p= \lambda(t + s \mid H_{t+s})/M(t)$ the retaining probability of the proposed point $t+s$}
        \If{$t+s<T$ and $u \leq p$} 
            \State{\textbf{update} $i=i+1$}
            \State{\textbf{set} $t_i=t+s$ (accept the proposed point)}
        \EndIf
        \State{\textbf{update} $t=t+s$}
      \EndWhile
     \State {\textbf{return } $\{t_i\}_{i}$}
    \end{algorithmic}
 \end{algorithm}

Note that the selection of the function $M$ plays a crucial role in the simulation process. 
On one hand, $M$ should satisfy $M(t) > \lambda(t )$ for all $t \in [0, T]$ ensuring that the intensity of proposed points exceeds the conditional intensity of the PHP.
On the other hand, $M$ need to be close to $\lambda$ to maximize the simulation's efficiency, thereby minimizing the number of rejected points.
Following~\cite{Ogata1981}, we opt for the function $M$ defined for $t \in [0, T]$ by
\begin{equation}
    \label{eq:bound_c_in_for_simulation}
    M(t) := \lambda(t) + \alpha \sum_{k=1}^K c_k \mathds{1}_{[\tau_{k-1}, \tau_k)}(t).
\end{equation}

\section{Additional simulations} \label{app:appSimuls}
This section is dedicated to the study of the robustness of the proposed method to the two tuning parameters that are, 
\begin{itemize}
  \item[-] the hyper-parameters $(a, b)$ used in the Poisson-Gamma contrast defined in Equation \eqref{contrast:PG} and
  \item[-] the sampling fraction $f$ for the cross-validation procedure described in Section \ref{sec:selection}.
\end{itemize}
For both parameters, we use the simulation design described in Section \ref{sec:simudesign}, with the two typical parameters $\lambdabar = 100$ and $\lambda_R = 8$. These values are chosen, as they yield in average performances for the proposed algorithm, as shown in Figure \ref{fig:simPP-HausL2cum}.

\paragraph{Hyper-parameters of the Poisson-Gamma contrast} Regarding the choice of the $(a, b)$, we stick to the rule $a/b = n$, which guarantees that the expectation of $N(T)$ is $n$. Hence, we are left with the choice of $a = nb$, reminding that the variance of the Gamma distribution $\Gam(a, b)$ is $a/b^2$, which increases as $a$ decreases (as $b=a/n$). We consider five values for $a$: $a = 0.1, 0.5, 1, 2$ and $10$.

\begin{figure}[ht]
  \begin{center}
    \begin{tabular}{ccc}
      $\widehat{K}$ & $d(\widehat{\taubf}, \taubf)$ & $\ell_2(\widehat{\Lambda}, \Lambda)$ \\
      \includegraphics[width=.3\textwidth, trim=10 10 10 10 50, clip=]{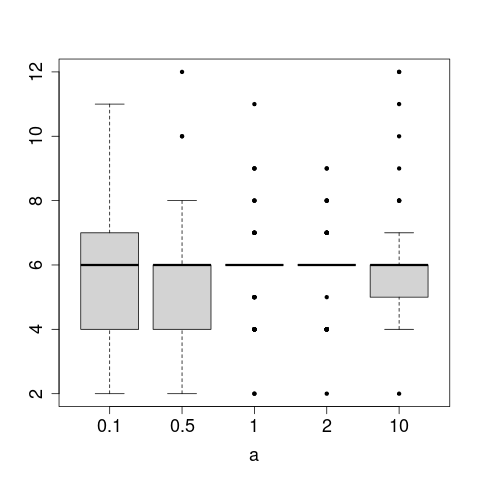} & 
      \includegraphics[width=.3\textwidth, trim=10 10 10 10 50, clip=]{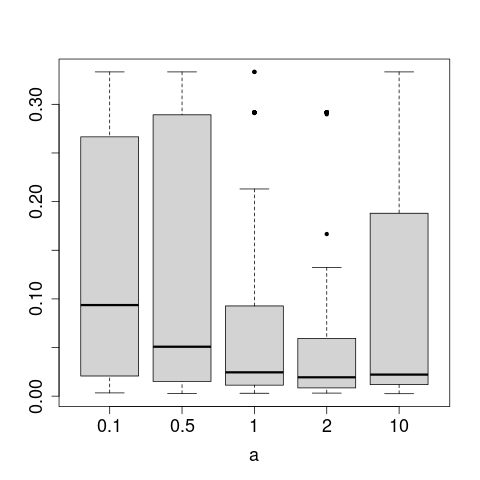} & 
      \includegraphics[width=.3\textwidth, trim=10 10 10 10 50, clip=]{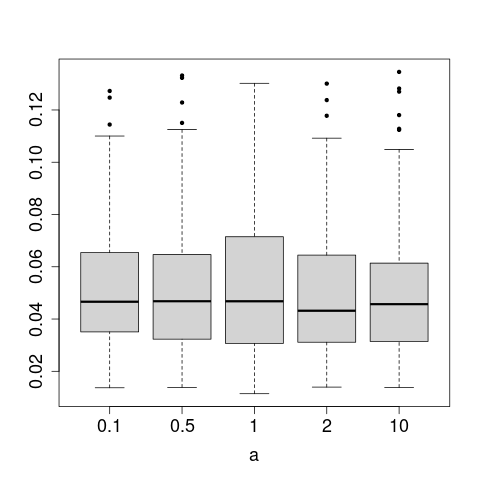}     
    \end{tabular}
    \caption{Effect of the choice of the hyper-parameter $a$ on three evaluation criteria for $\lambdabar = 100$, $\lambda_R = 8$ and for a true number of segments $K=6$.
    Left: selected number of segments $\widehat{K}$ using cross-validation; 
    Center: Hausdorff distance between the estimated change-points $\widehat{\taubf}$ and the true ones $\taubf$; 
    Right: relative $L^2$-norm between the estimated and the true cumulated intensity functions.
    \label{fig:simPP-aRobust}}
  \end{center}
\end{figure}

Figure \ref{fig:simPP-aRobust} (left) shows that the averaged selected $\widehat{K}$ is close to the true one ($K=6$) for all values of $a$ and that the suggested value $a = 1$ yields the lower variance. The center panel of the same figure shows that the best precision for the change-point locations, measured by the Hausdorff distance $d(\widehat{\taubf}, \taubf)$, is obtained for $a=1$ or $2$. Finally, the right panel, shows that the relative $L^2$ distance between the estimated and true cumulated intensity is fairly insensitive to the choice of $a$. Overall, this suggests that the recommendation $(a=1, b=1/n)$ is a reasonable and simple choice for the hyper-parameters.

\paragraph{Sampling fraction for cross-validation} 
In the simulation study presented in Section \ref{sec:simu}, a fraction $f = 4/5$ of the data is used to build each learning set. We consider here the fractions $f = 1/2, 2/3, 4/5$ and $9/10$.

\begin{figure}[ht]
  \begin{center}
    \begin{tabular}{ccc}
      $\widehat{K}$ & $d(\widehat{\taubf}, \taubf)$ & $\ell_2(\widehat{\Lambda}, \Lambda)$ \\
      \includegraphics[width=.3\textwidth, trim=10 10 10 10 50, clip=]{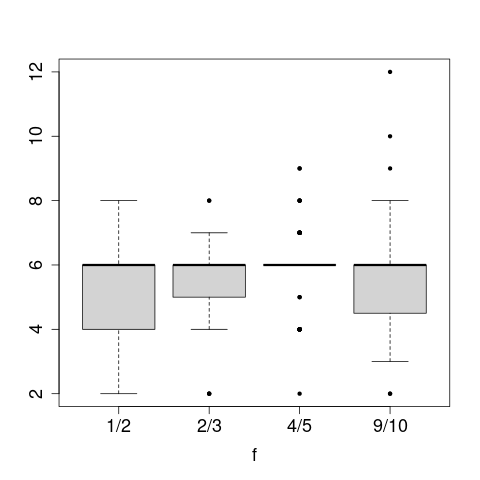} & 
      \includegraphics[width=.3\textwidth, trim=10 10 10 10 50, clip=]{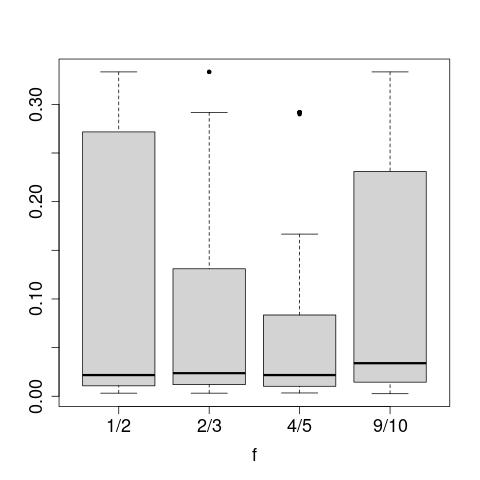} & 
      \includegraphics[width=.3\textwidth, trim=10 10 10 10 50, clip=]{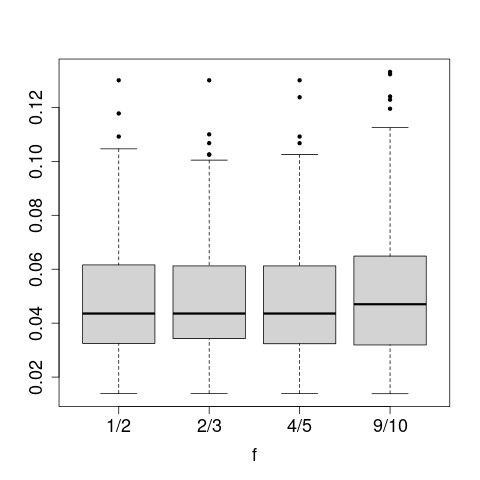}     
    \end{tabular}
    \caption{Effect of the choice of the fraction $f$ used in the cross-validation procedure described in Section \ref{sec:selection} on three evaluation criteria for $\lambdabar = 100$, $\lambda_R = 8$ and for a true number of segments $K=6$. Left, center and right: same legend as Figure \ref{fig:simPP-aRobust}.
    \label{fig:simPP-vRobust}}
  \end{center}
\end{figure}

Figure \ref{fig:simPP-vRobust} (left) shows that all values of $f$ give the right number of segment $K$ in average, but that the recommended value $f = 4/5$ yields the smallest variance. It also yields the smallest Hausdorff distance between the estimated and true change-points (center), while the relative $L^2$ distance between the estimate $\widehat{\Lambda}$ and the true $\Lambda$ is unaffected by the choice of $f$. The fraction $f = 4/5$ therefore seems to be a reasonable choice for the cross-validation in our case.

\section{Real life data sets illustrations} \label{app:appliPP}

\begin{figure}[hbtp]
  \centering
  \includegraphics[width=.35\textwidth, trim=0 0 20 30, clip=]{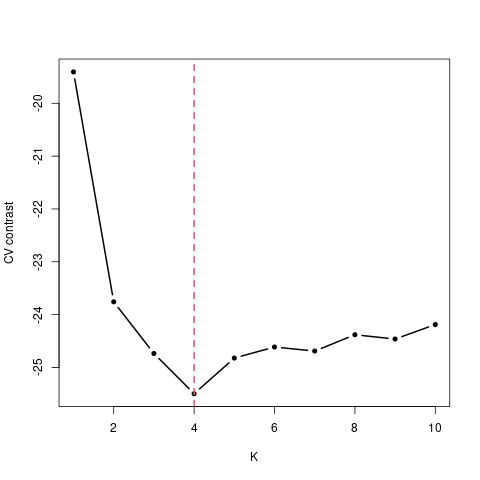} 
  \qquad
  \includegraphics[width=.35\textwidth, trim=0 0 20 30, clip=]{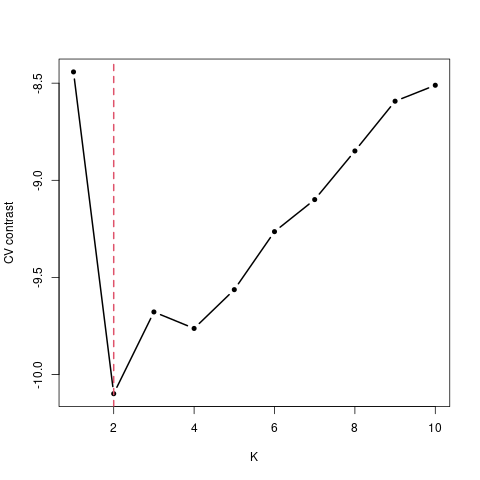}
  \caption{Eruptions of the Kilauea (left) et Mauna Loa (right) volcanos. Selection of the number of segments $K$: black solid-bullet line = criterion $\overline{\gamma}^{K, T}$, vertical red dotted line = optimal number of segments $\widehat{K}$. \label{fig:appPPselection}}
\end{figure}

\begin{figure}[hbtp]
  \centering
  \includegraphics[width=.35\textwidth, trim=0 0 20 30, clip=]{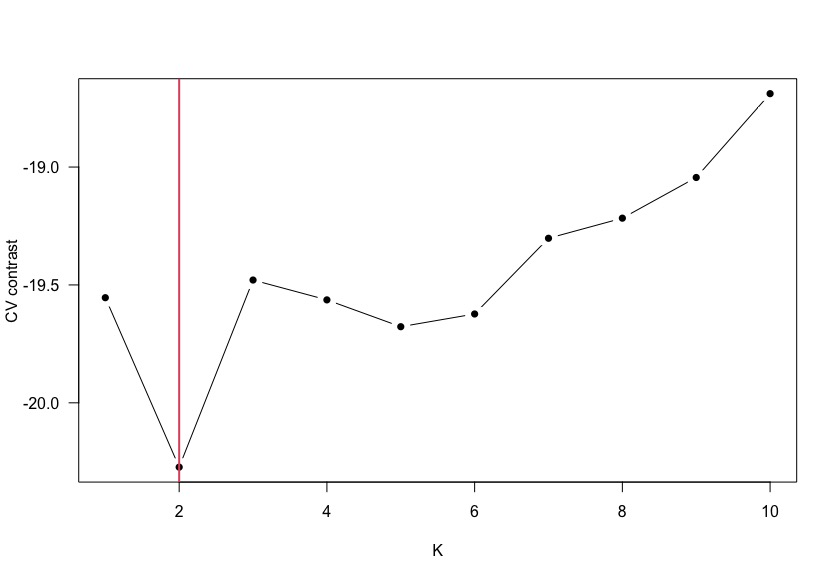}
  \qquad
  \includegraphics[width=.35\textwidth, trim=0 0 20 30, clip=]{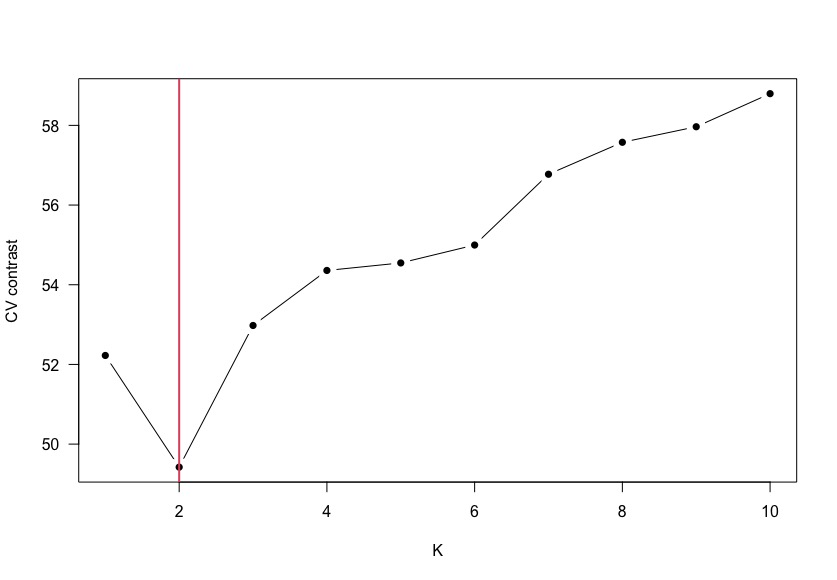}
  \caption{
  Eruptions of Etna.  Left: Poisson model. Right: marked Poisson model.
  Contrast function as a function of $K$, and in red the selected dimension $\w{K}=2$ in both cases.
  }
\label{fig:etna-contrasts}
\end{figure}

Figure \ref{fig:appPPappendix} displays the segmentations of the Mauna Loa series obtained by considering $K = 3, 4, 5$ and $6$ segments. The fit obviously improves as $K$ increases and the cross-validation procedure proposed in Section \ref{sec:selection} aims to avoid overfitting. Exploring higher values of $K$ may still be useful for uncovering tiny heterogeneity, which might merit further investigation (see, e.g., the 1870-1880 segment exhibited with $K=6$).

\begin{figure}[hbtp]
\centering
  $\begin{array}{cc}
    K=3 & K=4 \\
    \includegraphics[width=.35\textwidth, trim=0 0 20 30, clip=]{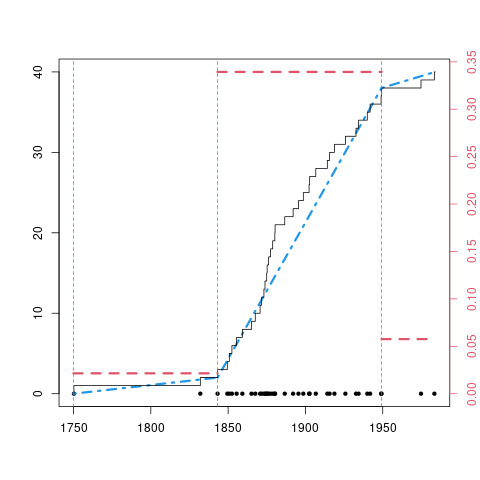} &
    \includegraphics[width=.35\textwidth, trim=0 0 20 30, clip=]{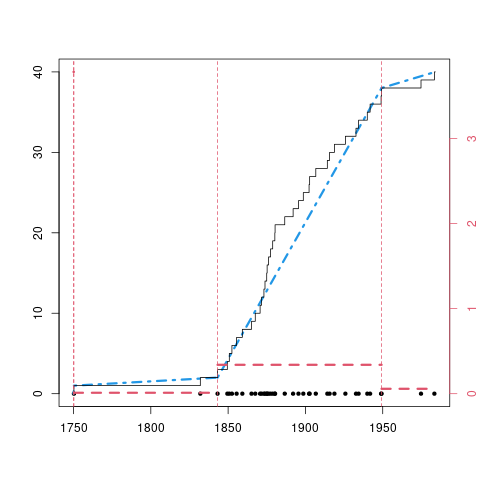} \\
        K=5 & K=6 \\
    \includegraphics[width=.35\textwidth, trim=0 0 20 30, clip=]{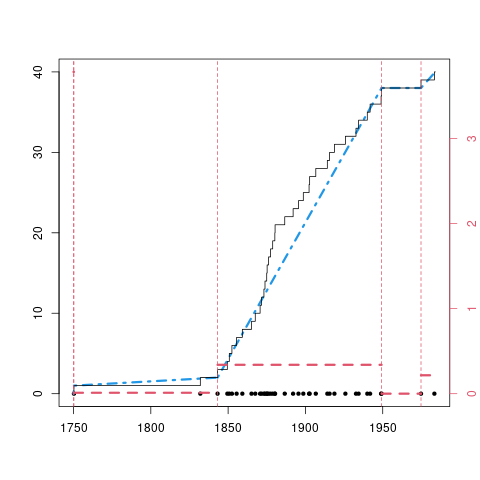} &
    \includegraphics[width=.35\textwidth, trim=0 0 20 30, clip=]{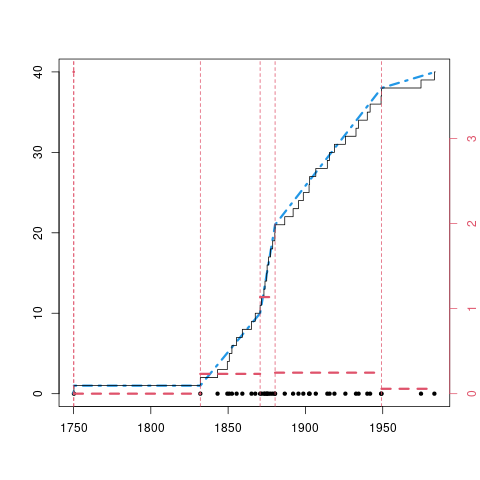}
  \end{array}$
  \caption{Eruptions of the Mauna Loa volcano. Segmentation with $K = 3, 4, 5, 6$ segments. Same legend as Figure \ref{fig:appPPsegmentation}. Note that the scale of the right $y$-axis differs between plots. The first segment for $K = 4, 5$ and $6$ is included in the first year of observation (1750) and contains only one event. \label{fig:appPPappendix}}
\end{figure}


\end{document}